\theoremstyle{plain}
\newtheorem{thm}{Theorem}
\newtheorem{lemma}{Lemma}
\theoremstyle{definition}
\newtheorem{defn}{Definition}
\theoremstyle{remark}
\newtheorem{rem}{Remark}
\newcommand{\virgolette}[1]{``#1''}
\newcommand{\R}{\mathbb{R}}{}
{}
\newcommand{\N}{\mathbb{N}}
\newcommand{\bS}{\mathbb{S}}
\newcommand{\B}{\mathbb{B}}
\newcommand{\cA}{\mathcal{A}}
\newcommand{\cB}{\mathcal{B}}
\newcommand{\M}{\langle M \rangle}
\newcommand{\cF}{\mathcal{F}}
\newcommand{\wt}{\widetilde}
\DeclareMathOperator{\co}{co}
\DeclareMathOperator{\Inn}{Int}
\DeclarePairedDelimiterX{\inp}[2]{\langle}{\rangle}{#1, #2}
\title{Data-driven stability analysis of switched affine systems}
\author{Matteo Della Rossa \and Zheming Wang\and Lucas N. Egidio  \and Rapha\"el M. Jungers%
\thanks{RJ is a FNRS honorary Research Associate. This project has received funding from the European Research Council (ERC) under the European Union's Horizon 2020 research and innovation programme under grant agreement No 864017 - L2C. RJ is also supported by the Innoviris Foundation and the FNRS (Chist-Era Druid-net). The second and third authors contribute equally.}%
\thanks{M. Della Rossa, Z. Wang, L. N. Egidio and R. Jungers are with the ICTEAM,
        UCLouvain, 4 Av. G. Lema\^{i}tre, 1348 Louvain-la-Neuve, Belgium. ).
        {\tt\small \{matteo.dellarossa,zheming.wang,lucas.egidio,} {\tt\small raphael.jungers\}@uclouvain.be}}%
}
\newcommand{\lucas}[1]{{\color{green!50!black}#1}}
\newcommand{\rj}[1]{{\color{orange!80!black}#1}}
\newcommand{\matteo}[1]{{\color{red}#1}}
\newcommand{\zheming}[1]{{\leavevmode\color{blue}#1}}
\renewcommand{\lucas}[1]{#1}
\renewcommand{\rj}[1]{#1}
\renewcommand{\matteo}[1]{#1}
\renewcommand{\zheming}[1]{#1}
\begin{document}

\maketitle

\begin{abstract}
We consider discrete-time switching systems composed of a finite family of \emph{affine} sub-dynamics. \rj{First,} \lucas{we recall existing results and present further analysis on} the stability problem, the existence and characterization of compact attractors, and the relations these problems have with the joint spectral radius of the set of matrices composing the linear part of the subsystems. \rj{Second}, we tackle the problem of providing probabilistic certificates of stability along with the existence of forward invariant  sets, assuming no knowledge on the system data but only observing a finite number of sampled trajectories. Some numerical examples illustrate the advantages and limits of the proposed conditions. 

\end{abstract}

\section{Introduction}
The stability analysis of switching dynamical systems has been extensively studied in recent years, for an overview, see the survey~\cite{ShoWir07} or the book~\cite{Lib03}.
More precisely, starting with a finite family of vector fields $f_1,\dots,f_M:\R^n\to \R^n$, a switching system in discrete-time is described by the difference equation
\[
x(k+1)=f_{\sigma(k)}(x(k)),
\]
where $\sigma:\N\to \{1,\dots, M\}$ is the so-called switching signal. This signal can either be a control input to be designed or an external (and unpredictable) disturbance. In both these circumstances, it is often assumed that the sub-systems $f_1,\dots, f_M$ share a common equilibrium (w.l.o.g., the origin), and the stabilizability/stability problems are then studied with respect to this point.
Among many other examples, various techniques to tackle this and other related problems are proposed in ~\cite{AhmJunPar14},~\cite{GerCol06},~\cite{GoeHu06},~\cite{MolPya89} (and references therein), both in the linear or non-linear cases, and they mostly rely on the concept of \emph{common/multiple Lyapunov functions}. In the linear case (i.e., $f_i(x)\coloneqq A_ix$, for all $i\in \{1,\dots, M\},$ and  $A_i\in \R^{n \times n}$), asymptotic stability is equivalent to the condition that the joint spectral radius (JSR) $\rho(\cA)$ of the set $\cA\coloneqq\{A_1,\dots ,A_M\}$ is strictly less than $1$, see~\cite[Corollary 1.1]{Jung09}.

On the other hand, if the subsystems \emph{do not} share a common equilibrium, the stability analysis becomes more challenging and requires a different approach, since most of the aforementioned results can not be directly applied. The simplest case where this phenomenon appears is when the subsystems are \emph{affine} \lucas{maps}, leading to switching systems of the form
\begin{equation}\label{eq:AffineIntroduction}
x(k+1)=A_{\sigma(k)}x(k)+b_{\sigma(k)},
\end{equation}
where $\{A_1, \dots, A_M\}\subset \R^{n\times n}$, $\{b_1,\dots, b_M\}\subset \R^n$, and again $\sigma:\N\to \{1,\dots, M\}$ represents the switching signal.
This framework is \rj{of particular importance} since it can suitably provide a mathematical model for a large class of physical systems, specially in the power electronics domain, see~\cite{deaecto2010switched} for instance.

Several existing works related to System~\eqref{eq:AffineIntroduction} are devoted to the stabilizability problem \lucas{(i.e., $\sigma$ is a control input) and it is known in this case that asymptotic stability of a point can only be ensured if it is an equilibrium point  of one subsystem. However, other types of stability, such as practical stability or limit-cycle stability, can also be studied in this context to characterize more general behavior of the system trajectories in the steady state (see~\cite{deaecto2016stability,egidio2020global,egidio2021dynamic,sanchez2019practical,xu2010some}, and references therein).} On the other hand, the analysis of~\eqref{eq:AffineIntroduction} under arbitrary switching signals (and thus considering $\sigma$ as an \lucas{exogenous input}), is less studied in the literature. In this setting, a first step to characterize a minimal forward invariant set, under the hypothesis that $\rho(\cA)<1$, is provided in~\cite{AthJun16} and~\cite{KouRakKer05}. In the first part of this article, we \lucas{push further} this analysis, providing new results concerning forward invariance and attractiveness of sets for~\eqref{eq:AffineIntroduction}.

\rj{Recently}, particular attention from the scientific community was directed to the stability problem of (various kinds of) dynamical systems considered as \emph{black boxes}, see, e.g., ~\cite{INP:KQHT16}, ~\cite{KenBal19}, ~\cite{ART:WJ20}. More precisely, in this framework, the system data (in our case, the vector fields $f_1, \dots, f_M$) is completely or partially unknown, and the user only observes a finite number of its trajectories. Starting from this finite amount of information, some (probabilistic) certificates of stability and forward invariance can still be obtained, see, for example~\cite{KenBal19},~\cite{Kor20},~\cite{ART:WJ20,INP:RWJ21,ART:BJW21}.

In this article, we \lucas{generalize} these ideas to affine switched systems as~\eqref{eq:AffineIntroduction}, establishing both probabilistic upper bound for the JSR $\rho(\cA)$ and probabilistic certificates of forward invariance. The main ideas behind our construction are inspired by the recent developments provided in~\cite{KenBal19}, which strictly considers the linear case.  As previously mentioned, some particular care is required in the affine setting, due to the absence, in general, of a common equilibrium. We provide two different methods; the first one requires an \textit{a priori} knowledge of \rj{an upper bound on} the norms of the vectors $b_1,\dots, b_M$, while the second approach relaxes this assumption, leading to an alternative stability certificate. The theoretical developments, their advantages and limits are finally illustrated with the aid of a numerical example.

\textbf{Notation:} The notation used through this article is standard. In particular we define  $\R_{\geq 0}\coloneqq\{x\in \R\,\,\,\vert\,x\geq 0\}$. Given any $M\in \N$ we define $\M\coloneqq\{1, \dots, M\}$. Given a set of matrices $\cA=\{A_1, \dots, A_M\}\subset \R^{n\times n}$, we denote by $\rho(\cA)$ the joint spectral radius of $\cA$, see \lucas{Appendix~\ref{app:jsr} for the formal definition and further discussions}. With $|\cdot|:\R^n\to \R_{\geq 0}$ we denote the standard Euclidean norm in $\R^n$. A symmetric and positive definite matrix $P\in \R^{n\times n}$ is denoted by $P\succ0$; we denote its \emph{condition number} by $\kappa(P):=\frac{\lambda_{\max}(P)}{\lambda_{\min}(P)}$ and the norm associated to $P$ by $\|x\|_P\coloneqq\sqrt{x^\top P x}$..
Given a set $C\subseteq\R^n$, $\Inn(C)$, $\partial C$, $\co(C)$ denote the interior, boundary and convex hull of $C$, respectively. For any $n\in \N$, $\Lambda^n$ denotes the standard probability simplex, i.e. $\Lambda^n\coloneqq\{\lambda\in \R^n\;\vert\;\sum_{j=1}^n\lambda_j=1,\;\;\lambda_j\geq 0,\;\forall j\in \langle n\rangle\}$.

\section{STABILITY ANALYSIS PRELIMINARIES}
For a given $M\in \N$, consider \lucas{$\cF\coloneqq\{ (A_i, b_i)\vert\, i\in \M\}$ a set of pairs of matrices  $A_i\in \R^{n \times n}, b_i\in \R^n,~\forall i\in\M$} characterizing the dynamics of subsystems for a given discrete-time switching system as
\begin{equation}\label{eq:Switchingsystems}
x(k+1)=A_{\sigma(k)}x(k)+b_{\sigma(k)},\quad x(0)=x_0,
\end{equation}
where $x:\N\to\R^n$ is the state \matteo{signal} and  $\sigma:\N\to \M$ is an arbitrary switching signal. We denote $\cA\coloneqq\{A_1, \dots, A_M\}\subset \R^{n\times n}$ and $\cB\coloneqq\{b_1,\dots, b_M\}\subset \R^n$ to refer to the set of dynamic matrices and affine terms, respectively.

Equivalently, we consider  the difference inclusion
\begin{equation}\label{eq:DiffInc}
x^+\in F(x)\coloneqq\{A_i x+b_i,\,\,i \in \M\}
\end{equation}
that holds pointwise in time for any trajectory obtained from~\eqref{eq:Switchingsystems}.
Given a switching sequence $\sigma:\N \to \M$ and a $j\in \N$, we define $A^j_{\sigma}\coloneqq A_{\sigma(j-1)}A_{\sigma(j-2)}\cdot\cdot\cdots A_{\sigma(0)}$, and by convention $A^0_{\sigma}=I$. Therefore, for an arbitrary initial condition $x_0\in \R^n$ the state trajectory of~\eqref{eq:Switchingsystems} under  $\sigma$ is 
\begin{equation}\label{eq:solution}
x(k,\sigma,x_0)=A^k_{\sigma}x_0+\sum_{j=0}^{k-1}A^j_{\sigma}b_{\sigma(k-j-1)}.
\end{equation}
It is noteworthy that the map to $k$ steps ahead  generated by~\eqref{eq:Switchingsystems} can also be represented as a switched affine system defined by the solution~\eqref{eq:solution}.

\matteo{
After some preliminary definitions, we recall results ensuring that if the joint spectral radius  $\rho(\cA)<1$ (see \lucas{Appendix~\ref{app:jsr}} for the formal definition), then System~\eqref{eq:Switchingsystems} admits a (non-empty) forward invariant and attractive compact set.
}
\begin{defn}
Given $C\subseteq \R^n$ a compact set and a difference inclusion as in~\eqref{eq:DiffInc}, we say that
\begin{enumerate}
    \item $C$ is \emph{forward invariant} if $F(C)\subseteq C$ and $C$ is \emph{strictly forward invariant} if $F(C)\subseteq \Inn(C)$,
    \item A forward invariant $C$  is \emph{minimal} if, for every forward invariant set $D$, it holds that $C\subseteq D$,
    \item $C$ is attractive if for every $x_0\in \R^n$ and every $\sigma:\N\to \M$, we have
    \[
    \text{dist}( x(k,\sigma, x_0), C)\to 0,\,\,\,\,\text{as } \,k\to +\infty.
    \]
\end{enumerate}
\end{defn}
\begin{rem}
Note that if a minimal forward invariant set $C$ exists, then it is unique and, moreover, it has the property that $F(C)=C$. Indeed,
$F(F(C))\subseteq F(C)$, and thus $F(C)$ is also forward invariant; from minimality of $C$, we thus have $C\subseteq F(C)$. Hence, the minimal forward invariant set can be interpreted as a \virgolette{set-valued equilibrium} for~\eqref{eq:DiffInc}.\hfill $\triangle$
\end{rem}
\matteo{We now introduce a property of affine switched systems that is used in what follows.}
\begin{lemma}\label{lemma:ConvexClosure}
Given any set $C\subseteq \R^n$ \matteo{and a set-valued map $F:\R^n \rightrightarrows \R^n$   as in~\eqref{eq:DiffInc}}, we have $F(\co(C))\subseteq\co(F(C))$.
In particular, this implies that if $C$ is forward invariant for~\eqref{eq:Switchingsystems}, then $\co(C)$ is also forward invariant.
\end{lemma}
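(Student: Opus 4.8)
The plan is to unwind the definitions and exploit the affine structure of $F$. First I would fix an arbitrary point $y\in F(\co(C))$ and, using the definition of $F$ in~\eqref{eq:DiffInc}, write $y=A_ix+b_i$ for some index $i\in\M$ and some $x\in\co(C)$. Since $x$ lies in the convex hull of $C$, I can express it as a finite convex combination $x=\sum_{j=1}^{k}\lambda_j c_j$ with $c_j\in C$ and $\lambda\in\Lambda^k$ (no appeal to Carath\'eodory is needed, as $\co(C)$ is understood as the set of all finite convex combinations of points of $C$).

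The key step is then the elementary identity
\[
y=A_i\Big(\sum_{j=1}^{k}\lambda_j c_j\Big)+b_i=\sum_{j=1}^{k}\lambda_j\,(A_i c_j+b_i),
\]
where the second equality uses crucially that $\sum_{j=1}^{k}\lambda_j=1$, so that the constant term $b_i$ is correctly reproduced after distributing. Each summand $A_i c_j+b_i$ belongs to $F(c_j)\subseteq F(C)$, hence $y$ is a convex combination of points of $F(C)$, i.e.\ $y\in\co(F(C))$. As $y$ was arbitrary, this yields $F(\co(C))\subseteq\co(F(C))$.

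For the second claim, suppose $C$ is forward invariant for~\eqref{eq:Switchingsystems}, i.e.\ $F(C)\subseteq C$. Taking convex hulls preserves inclusions, so $\co(F(C))\subseteq\co(C)$; combining this with the inclusion just proved gives $F(\co(C))\subseteq\co(F(C))\subseteq\co(C)$, which is precisely forward invariance of $\co(C)$.

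I do not expect a genuine obstacle here: the proof is a one-line computation. The only points deserving care are (i) recording that the argument does not secretly need $C$ to be compact or closed, and (ii) noting that the cancellation of the affine terms $b_i$ works exactly because the convex coefficients sum to one — the same computation would fail for a general non-affine $F$, and this is the structural feature that makes the lemma specific to switched affine systems.
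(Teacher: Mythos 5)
Your proof is correct and follows essentially the same route as the paper: decompose a point of $\co(C)$ as a convex combination, distribute the affine map using $\sum_j\lambda_j=1$, and conclude. The only (immaterial) difference is that you use the definition of $\co(C)$ as all finite convex combinations, whereas the paper invokes Carath\'eodory to bound the number of terms by $n+1$; neither is needed over the other here.
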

\begin{proof}
Consider an arbitrary $y\in F(\co(C))$. By the Caratheodory Theorem, there exist $x_1, \dots, x_{n+1}\in S$ and $\lambda \in \Lambda^{n+1}$ such that $y\in F(\sum_{j=1}^{n+1}\lambda_j x_j)$. Therefore, there exists an $i\in \M$ such that 
\[
y= A_i\left(\sum_{j=1}^{n+1}\lambda_j x_j\right)+b_i=\sum_{j=1}^{n+1}\lambda_j(A_ix_j+b_i)\in \co\{F(C)\}.
\]
For the second part, consider $C$ forward invariant (i.e. $F(C)\subseteq C$) then
\[
F(\co(C))\subseteq\co(F(C))\subseteq \co(C),
\]
proving that $\co(C)$ is forward invariant.
\end{proof}
Note that, in general, this lemma does not hold for non-affine vector fields.
 
 \matteo{
 In what follows, assuming $\rho(\cA)<1$, we consider a $\wt \rho\in \R$ such that $\rho(\cA)<\wt \rho<1$ and we denote by $\|\cdot\|_\cA:\R^n\to \R_{\geq 0}$ a \emph{$\wt \rho$-contractive} norm, i.e. a norm satisfying 
 \begin{equation}\label{eq:LyapNorm}
     \|Ax\|_\cA\leq \wt \rho\|x\|_\cA,\,\,\,\forall\,x\in \R^n,\,\,\forall \,A\in \cA.
 \end{equation}
 The existence of this norm is ensured by the result recalled in Lemma \ref{lemma:ExtremalNorms} in Appendix \ref{app:jsr}; with $\B_\cA$ we denote the unit ball of $\|\cdot\|_\cA$, i.e. $\B_\cA:=\{x\in \R^n\,\vert\, \|x\|_\cA\leq 1\}$ .}
 
At this point, we can recall the main stability result for~\eqref{eq:Switchingsystems}, derived from the discussions presented in~\cite{AthJun16}.

\begin{lemma}\label{lemma:Kinfty}
If $\rho(\cA)<1$ there exists a non-empty, compact, minimal and attractive forward invariant set $K_\infty$ for~\eqref{eq:Switchingsystems}. In addition,
\begin{equation}\label{eq:upperBound}
K_\infty \subseteq\frac{\|\cB\|_\cA}{1-\wt \rho}\B_\cA,
\end{equation}
where $\|\cB\|_\cA\coloneqq\max_{i\in \M}\{\|b_i\|_\cA\}$. 
\end{lemma}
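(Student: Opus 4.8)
The plan is to build $K_\infty$ explicitly as a decreasing intersection of iterated images of the ball appearing in \eqref{eq:upperBound}, and then to check the asserted properties one at a time. \emph{First}, set $r:=\frac{\|\cB\|_\cA}{1-\wt\rho}$ and $B^\ast:=r\,\B_\cA$; the triangle inequality for $\|\cdot\|_\cA$ together with the contractivity \eqref{eq:LyapNorm} gives, for every $x$ with $\|x\|_\cA\le r$ and every $i\in\M$, $\|A_ix+b_i\|_\cA\le\wt\rho\, r+\|\cB\|_\cA=\wt\rho\, r+(1-\wt\rho)r=r$, so $F(B^\ast)\subseteq B^\ast$. Since $F$ is monotone and maps compact sets to compact sets (a finite union of affine images of a compact set), $C_j:=F^j(B^\ast)$ is a decreasing chain of non-empty compact sets, and I define $K_\infty:=\bigcap_{j\ge0}C_j$. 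It is then non-empty and compact, and $K_\infty\subseteq C_0=B^\ast$, which is exactly \eqref{eq:upperBound}. (Note $B^\ast$ is already convex, so Lemma \ref{lemma:ConvexClosure} is not actually needed here.)

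\emph{Forward invariance and $F(K_\infty)=K_\infty$.} From $F(\bigcap_jC_j)\subseteq\bigcap_jF(C_j)=\bigcap_jC_{j+1}=K_\infty$ we get $F(K_\infty)\subseteq K_\infty$. For the reverse inclusion, take $y\in K_\infty$; for each $j$ we have $y\in C_{j+1}=F(C_j)$, hence there are $i_j\in\M$ and $x_j\in C_j$ with $y=A_{i_j}x_j+b_{i_j}$. As $\M$ is finite, some index $i$ recurs for infinitely many $j$, and along this subsequence $x_j\in C_0$ is bounded; extracting a convergent subsequence $x_j\to x^\ast$ and using closedness and nestedness of the $C_m$ gives $x^\ast\in C_m$ for all $m$, i.e. $x^\ast\in K_\infty$, whence by continuity $y=A_ix^\ast+b_i\in F(K_\infty)$. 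Thus $F(K_\infty)=K_\infty$, and in particular $K_\infty$ is forward invariant.

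\emph{Attractiveness.} I would first record the elementary fact that a decreasing chain $C_j$ of compact sets with $\bigcap_jC_j=K_\infty$ satisfies $d_j:=\sup_{z\in C_j}\text{dist}_\cA(z,K_\infty)\to0$: otherwise the sets $C_j\cap\{z:\text{dist}_\cA(z,K_\infty)\ge\varepsilon\}$ would be a decreasing chain of non-empty closed subsets of the compact set $C_0$, hence with non-empty intersection, contradicting that this intersection is $K_\infty\cap\{z:\text{dist}_\cA(z,K_\infty)\ge\varepsilon\}=\varnothing$. Now fix $x_0$ and $\sigma$ and split \eqref{eq:solution} as $x(k,\sigma,x_0)=A^k_\sigma x_0+\xi_k$ with $\xi_k:=\sum_{j=0}^{k-1}A^j_\sigma b_{\sigma(k-j-1)}=x(k,\sigma,0)$. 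Since $0\in B^\ast$ and $F$ is monotone, $\xi_k\in F^k(\{0\})\subseteq F^k(B^\ast)=C_k$, so $\text{dist}_\cA(\xi_k,K_\infty)\le d_k\to0$; moreover $\|A^k_\sigma x_0\|_\cA\le\wt\rho^{\,k}\|x_0\|_\cA\to0$. Combining these two estimates and invoking equivalence of $\|\cdot\|_\cA$ with the Euclidean norm yields $\text{dist}(x(k,\sigma,x_0),K_\infty)\to0$.

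\emph{Minimality (the main obstacle).} Let $D$ be any non-empty compact forward invariant set and pick $x_0\in D$, so that $x(k,\sigma,x_0)\in D$ for all $k,\sigma$. The difficulty is that attractiveness only pushes trajectories \emph{towards} $K_\infty$ and does not by itself force $K_\infty\subseteq D$; the trick is to use $F(K_\infty)=K_\infty$ to run the dynamics backwards. Given $y\in K_\infty$, iterating $K_\infty\subseteq F(K_\infty)$ produces $y=y_0,y_1,\dots,y_n\in K_\infty$ and indices with $y_{t-1}=A_{i_t}y_t+b_{i_t}$, hence $y=x(n,\tau_n,y_n)$ where $\tau_n(t):=i_{n-t}$ reads the indices in reverse order. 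Since by \eqref{eq:solution} the dependence of the state on its initial value goes only through $A^n_{\tau_n}$, we have $y-x(n,\tau_n,x_0)=A^n_{\tau_n}(y_n-x_0)$, and applying \eqref{eq:LyapNorm} $n$ times bounds its $\|\cdot\|_\cA$-norm by $\wt\rho^{\,n}\sup_{z\in B^\ast}\|z-x_0\|_\cA\to0$. As $x(n,\tau_n,x_0)\in D$ and $D$ is closed, letting $n\to\infty$ gives $y\in D$; thus $K_\infty\subseteq D$, which is minimality, and uniqueness follows immediately.
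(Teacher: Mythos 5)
Your proof is correct, but it is built in the opposite direction from the paper's. The paper (following \cite{AthJun16}) constructs $K_\infty$ \emph{from the inside}: it iterates the origin, $K_k=F^k(\{0\})$ as in \eqref{eq:DefSet}, takes the Hausdorff limit, and reads the bound \eqref{eq:upperBound} off the geometric series in \eqref{eq:solution}; the verification that this limit is well defined, minimal and attractive is deferred to the reference. You instead construct $K_\infty$ \emph{from the outside}, as the decreasing intersection $\bigcap_j F^j(B^\ast)$ of iterates of the invariant ball $B^\ast=\frac{\|\cB\|_\cA}{1-\wt\rho}\B_\cA$, whose invariance follows from \eqref{eq:LyapNorm} and the triangle inequality. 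This makes non-emptiness, compactness and the inclusion \eqref{eq:upperBound} immediate, and you then supply complete arguments for the remaining properties: $F(K_\infty)=K_\infty$ by a pigeonhole/compactness extraction, attractiveness by splitting \eqref{eq:solution} into the contracting linear part and the tail $\xi_k=x(k,\sigma,0)\in F^k(\{0\})\subseteq F^k(B^\ast)$ together with the standard fact that nested compacts approach their intersection, and --- the genuinely non-trivial step --- minimality by running the dynamics backwards through $K_\infty\subseteq F(K_\infty)$ so that any $y\in K_\infty$ is approximated by points $x(n,\tau_n,x_0)\in D$ up to an error $A^n_{\tau_n}(y_n-x_0)$ controlled by $\wt\rho^{\,n}$. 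All of these steps check out (your reversed signal $\tau_n$ does reproduce $y=x(n,\tau_n,y_n)$, and only closedness of $D$ is used). The trade-off: your route is self-contained and gives the bound for free, while the paper's inner construction directly exhibits $K_\infty$ as the limit of the reachable sets \eqref{eq:DefSet}, which is the description exploited afterwards (e.g.\ to show $c_i=(I-A_i)^{-1}b_i\in K_\infty$). One cosmetic point: as stated in the paper, the definition of minimality quantifies over \emph{all} forward invariant sets, and the empty set is vacuously forward invariant; your (necessary) restriction to non-empty $D$ is the intended reading.
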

The proof of this lemma can be found in~\cite{AthJun16}, but, for completeness, we recall here the construction of $K_\infty$.
The expression~\eqref{eq:solution} suggests the following iterative definition:
\begin{equation}
\begin{aligned}
K_0&\coloneqq\{0\},\\
K_1&\coloneqq\cB=\{b_1,\dots, b_m\},\\
&\vdots\\
K_k&\coloneqq\{A_ix+b_i\,\,\vert i\in \M, \,x\in K_{k-1}\},
\end{aligned}
\end{equation}
for any $k\in \N$.
Equivalently, for every $k>0$,
\begin{equation}\label{eq:DefSet} 
K_k=\left\{ \sum_{j=0}^{k-1}A^j_{\sigma}b_{\sigma(k-j)}\;\vert\;\;\sigma:\langle k\rangle\to\M\right\}
\end{equation}
We then define $K_\infty\coloneqq\lim_{k\to \infty}K_k$, and following~\cite{AthJun16}, it can be shown that $K_\infty$ is well-defined and it satisfies the properties in Lemma~\ref{lemma:Kinfty}.
Note that, in the particular case where $b_i=0,~\forall i\in \M$, we obtain $K_k=\{0\}$ for any $k\in \N$, and thus $K_\infty=\{0\}$, as expected. 
Moreover, defining $c_i\coloneqq(I-A_i)^{-1}b_i$ for $i\in \M$, it can be seen that $\{c_i\,\,\vert\,\,i\in \M\}\subseteq K_\infty$. This can be done by evaluating~\eqref{eq:solution} for constant  switching signals with $k\rightarrow\infty$, and recalling that, for any Schur stable matrix $A$, the so-called Neumann series yields $(I-A)^{-1}=\sum_{j=0}^\infty A^j$.
Finally, notice that $\co(K_\infty)$ is the minimal \emph{convex} forward invariant set, and again we have 
$
\co(K_\infty) \subseteq\frac{\|\cB\|_\cA}{1-\wt \rho}\B_\cA
$,
since the unit ball of any norm is convex.

Now that we have theoretically characterized the minimal forward invariant and attractive set, we present some crucial results for the data-driven stability analysis, which will be introduced in the subsequent section. For the sake of readability, in the following lemmas we consider a single affine map, \matteo{but they can be generalized  for the set-valued map defined in~\eqref{eq:DiffInc}.}
\begin{lemma}\label{lemma:SublevelSetLyap}
Consider an affine function $f:\R^n\to \R^n$, and a real-valued function $v:\R^n\to \R$ convex, bounded from below and radially unbounded. Consider $K\coloneqq\{x\in \R^n\,\vert\,\, v(x)\leq R\}$, for some $R\in \R$.
If
\[
v(f(x))\leq v(x) ,\,\,\forall x\in \partial K,
\]
then $K$ is a compact and convex forward invariant set for the system $x^+=f(x)$.
\end{lemma}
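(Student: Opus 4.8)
The plan is to obtain compactness and convexity of $K$ directly from the structural hypotheses on $v$, and then to reduce the forward invariance claim to a convexity argument for the function $v\circ f$. Since $v$ is convex and finite-valued on all of $\R^n$, it is continuous, hence $K=\{x\,\vert\,v(x)\le R\}$ is closed; radial unboundedness of $v$ forces $K$ to be bounded, so $K$ is compact, and convexity of $K$ is immediate from convexity of $v$. (If $R<\inf_{\R^n}v$ then $K=\emptyset$, for which the statement is vacuous, so one may assume $K\neq\emptyset$; the boundedness-from-below assumption on $v$ only serves to guarantee $\inf_{\R^n}v>-\infty$ so that such a threshold $R$ exists.)

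For forward invariance one must show $v(f(x))\le R$ for every $x\in K$, whereas the hypothesis only gives $v(f(x))\le v(x)\le R$ on $\partial K$. The key point is that $w\coloneqq v\circ f$ is convex, being the composition of a convex function with an affine map, and that $w\le R$ on $\partial K$. Hence it suffices to show $K=\co(\partial K)$: given $x\in K$, writing $x=\lambda a+(1-\lambda)b$ with $a,b\in\partial K$ and $\lambda\in[0,1]$, convexity of $w$ yields $v(f(x))=w(x)\le\lambda w(a)+(1-\lambda)w(b)\le R$, i.e.\ $f(x)\in K$.

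To establish $K=\co(\partial K)$ I would argue as follows. If $\Inn(K)=\emptyset$, then $K=\partial K$ and the hypothesis already gives the conclusion directly. Otherwise, take $x\in\Inn(K)$ (the case $x\in\partial K$ being trivial) and any direction $d\neq 0$; since $K$ is compact and convex, the set $\{t\in\R\,\vert\,x+td\in K\}$ is a bounded closed interval $[t_-,t_+]$ with $t_-<0<t_+$, its endpoints $a\coloneqq x+t_-d$ and $b\coloneqq x+t_+d$ lie in $\partial K$, and $x$ is a convex combination of $a$ and $b$.

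I expect the only genuine obstacle to be this last geometric step — representing every interior point of $K$ as an interior point of a chord with both endpoints on $\partial K$, which is precisely where compactness (boundedness) of $K$ is used — together with the need to treat the degenerate cases ($K$ empty, $K$ lower-dimensional) separately, though both are immediate. Everything else is bookkeeping: continuity of finite convex functions, stability of convexity under affine precomposition, and the fact that sublevel sets of convex functions are convex.
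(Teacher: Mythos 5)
Your proof is correct and takes essentially the same route as the paper's: both hinge on the convexity of $v\circ f$ together with the fact that a convex function attains its maximum over a compact convex set on the boundary. The only difference is that the paper cites this last fact from Rockafellar, whereas you prove it directly via the chord argument (every point of $K$ lies on a segment with both endpoints in $\partial K$), and your treatment of the degenerate cases is sound.
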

\begin{proof}
Compactness and convexity of $K$ trivially follow from the definition of $v$. Regarding the forward invariance, note that
\[
\max_{x\in K} v(f(x))=\max_{x\in \partial K}v(f(x))\leq \max_{x\in \partial K} v(x)\leq R,
\]
since $v\circ f:\R^n\to \R$ is convex and, hence, its maximum over a convex set $K$ is attained on the boundary $\partial K$,
(\cite{RockConv}).
\end{proof}
We say that a set $X\subset \R^n$ is a \emph{symmetric $C$-set} if $X$ is compact and convex, $0\in \Inn(X)$ and $x\in X$ implies $-x\in X$. \matteo{In the following result, we need to strengthen the hypothesis of Lemma \ref{lemma:SublevelSetLyap}, considering norms induced by inner products, which, on $\R^n$, is equivalent to consider functions $v:\R^n\to \R$ defined by $v(x):=\sqrt{x^\top Px}$ for some positive definite matrix $P\in \R^{n\times n}$, $P>0$. Note that any norm in particular satisfies the hypothesis of Lemma \ref{lemma:SublevelSetLyap}.}
\begin{lemma}\label{lemma:CoreLemma}
Consider a norm $v:\R^n\to \R$ \matteo{induced by an inner product}, and an affine map $f:\R^n\to \R^n$ defined by $f(x)\coloneqq Ax+b$. Suppose  that
\[
v(Ax+b)\leq \rho v(x),\,\,\,\forall \,\,x\in \partial X,
\]
where $X$ is a symmetric $C$-set and $\rho\in \R_{\geq 0}$.
Then 
\begin{enumerate}
\item $v(Ax)\leq \rho v(x)$, $\forall \,\,x\in \R^n$,
\item $v(Ax+b)\leq \rho v(x)$ $\forall \,\,x\in \R^n\setminus X$,
\item If $\rho<1$, the set $K\coloneqq\{x\in \R^n\,\vert v(x)\leq \max_{y\in \partial X}v(y)\}$ is attractive and forward invariant for $x^+=Ax+b$.\end{enumerate}
\end{lemma}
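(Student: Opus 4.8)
The plan is to establish the three items in order; item~1 carries the real content, while items~2 and~3 follow from scaling arguments and the convexity of $v\circ f$ (exactly as in the proof of Lemma~\ref{lemma:SublevelSetLyap}).

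For item~1, the point is to exploit that $X$ is a \emph{symmetric} $C$-set: if $x\in\partial X$ then $-x\in\partial X$, so the hypothesis applied at $x$ and at $-x$ gives both $v(Ax+b)\le\rho v(x)$ and $v(A(-x)+b)=v(b-Ax)=v(Ax-b)\le\rho v(x)$. Since $v$ is induced by an inner product, the parallelogram identity gives $v(Ax+b)^2+v(Ax-b)^2=2v(Ax)^2+2v(b)^2$, and combining with the two bounds yields $v(Ax)^2\le\rho^2 v(x)^2-v(b)^2\le\rho^2 v(x)^2$, i.e. $v(Ax)\le\rho v(x)$ on $\partial X$. Because $0\in\Inn(X)$ and $X$ is compact and convex, every $x\neq 0$ can be written $x=sz$ with $z\in\partial X$ and $s>0$ (scaling by the Minkowski gauge of $X$), so homogeneity of $v$ and linearity of $A$ extend the inequality to all of $\R^n$; the case $x=0$ is trivial.

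For item~2, take $x\in\R^n\setminus X$ and scale to the boundary: $x=sz$ with $z\in\partial X$ and $s=\|x\|_X\ge 1$. Then $Ax+b=sAz+b=(s-1)(Az)+(Az+b)$, and the triangle inequality together with item~1 (for the first summand) and the hypothesis (for the second) gives $v(Ax+b)\le(s-1)\rho v(z)+\rho v(z)=s\rho v(z)=\rho v(x)$. For item~3, set $R\coloneqq\max_{y\in\partial X}v(y)>0$, so that $K=\{x\mid v(x)\le R\}$, and note $X\subseteq K$ by the same scaling. If $x\in K\setminus X$, item~2 gives $v(f(x))\le\rho v(x)\le\rho R$; if $x\in X$, then $v\circ f$ being convex attains its maximum over the compact convex set $X$ on $\partial X$, where the hypothesis bounds it by $\rho R$. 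Hence $v(f(x))\le\rho R\le R$ for all $x\in K$, which proves forward invariance. For attractiveness, if $x_0\notin K$ then $v(x_0)>R$, hence $x_0\notin X$, and item~2 applies; iterating, as long as the trajectory stays outside $K$ one has $v(x(k))\le\rho^k v(x_0)$, which falls below $R$ after finitely many steps since $\rho<1$, after which the trajectory remains in $K$ by forward invariance. Thus $\text{dist}(x(k),K)\to 0$.

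The main obstacle is item~1: without the inner-product structure the parallelogram step is unavailable, and it is precisely the symmetry of $X$ that produces the companion bound $v(Ax-b)\le\rho v(x)$ needed to cancel the unwanted $2v(b)^2$ term. The remaining steps are routine manipulations with the Minkowski gauge and with convex functions on convex sets.
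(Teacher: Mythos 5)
Your proof is correct and follows essentially the same route as the paper's: symmetry of $X$ plus the parallelogram identity for Item~1 (you argue directly where the paper argues by contradiction, but the computation is the same), the Minkowski-gauge scaling and triangle inequality for Item~2, and the convexity/sublevel-set argument for Item~3 (which the paper delegates to Lemma~\ref{lemma:SublevelSetLyap}).
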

\begin{proof}
For Item 1), note that the case $x=0$ is trivial. Consider a generic $x\in \R^n\setminus \{0\}$, and suppose by contradiction that $v(Ax)> \rho v(x)$. By linearity, the same holds for every $y\in \text{span}(x)\setminus \{0\}$. Consider $\hat x\in \text{span}(x)\cap \partial X$, by symmetry of $X$, also $-\hat x\in \partial X$.
By the parallelogram law we have
\[
\begin{aligned}
v^2(A\hat x+b)+v^2(-A\hat x+b)&= 2v^2(A\hat x)+2v^2(b)\\
&> 2\rho^2v^2(\hat x)+2v^2(b).
\end{aligned}
\] 
But, by hypothesis, $\hat x,-\hat x\in \partial X$, and then 
\[
v^2(A\hat x+b)+v^2(-A\hat x+b)\leq \rho^2v^2(\hat x)+\rho^2v^2(-\hat x)=2\rho^2v^2(\hat x),
\]
leading to a contradiction.
For Item 2) consider a generic $x\in \R^n\setminus X$, we can write $x=\lambda \hat x$, with $\lambda>1$ and $\hat x\in \partial X$. Computing
\[
\begin{aligned}
v(Ax+b)&=v(\lambda A \hat x+b)=v((\lambda-1)A\hat x+A\hat x+b)\\&\leq v((\lambda-1)A\hat x)+v(A\hat x+b)\\&=(\lambda-1)v(A\hat x)+v(A\hat x+b)\\&\leq (\lambda-1)\rho v(\hat x)+\rho v(\hat x)=\rho \lambda v(\hat x)=\rho v(x),
\end{aligned}
\]
where we have applied Item 1).

For Item 3), since $v$ is convex, note that $X\subseteq K$,  proving attractiveness, since $v(A_ix+b_i)-v(x)<0$, for all $x\notin K$, for all $i\in \M$. Moreover, $\partial K\subset \R^n\setminus \Inn(X)$, which ensures by Item 2) that, for any $y\in \partial K$,  we have
\[
 v(Ay+b)\leq v(y).
\]
Hence, Lemma~\ref{lemma:SublevelSetLyap} provides the forward invariance property and concludes the proof.
\end{proof}

\section{DATA-DRIVEN STABILITY ANALYSIS}
\label{sec:datadriven}
\matteo{
In this section, in analyzing~\eqref{eq:Switchingsystems}, we suppose that we \emph{do not} have access to the system dynamics $\cF\coloneqq\{ (A_i, b_i)\vert\, i\in \M\}$; the only information available is the number of subsystems $M\in \N$. On the other hand, we suppose that we observe $N$ trajectories of length $1$ of system \eqref{eq:Switchingsystems}, denoted, for every $i \in  \langle N \rangle$, by $(x_{0,i},x_{1,i})$, with  $x_{1,i}=A_{j_i}x_{0,i}+b_{j_i}$ for some $j_i\in \M$. The initial points  $x_{0,i}$ will be chosen on a sphere of radius $R>0$ denoted by $\bS_R:=\{x\in \R^n\,\vert\,|x|=R\}$; we assume that  the user can choose the radius $R$ depending on the particular setting, (further discussions on this choice are provided in the sequel). We suppose that the initial points $x_{0,i}\in \bS_R$ and the modes $j_i\in \M$ are sampled with respect to uniform probability measure on $\bS_R$ and $\M$, respectively. We refer to Appendix \ref{app:prob} for the formal definition of these probability measures and related concepts.

Our goal is, starting with this limited information, to provide a probabilistic certificate of stability for~\eqref{eq:Switchingsystems}  and to provide a bounded set, together with a probabilistic certificate of forward invariance. This will be done by generalizing, for switched affine systems, the results presented in~\cite{KenBal19}, which tackle exclusively linear mappings of the form $x^+\in \{A_ix\;\vert\;i\in \M\}$. Our results, however, are not a trivial byproduct of these recent developments, due to the complex nature of the attractor (Lemma~\ref{lemma:Kinfty}) and  the lack of homogeneity of system~\eqref{eq:Switchingsystems} (in general, $F(\lambda x)\neq \lambda F(x)$ for $\lambda\in \R$, for $F$ defined in~\eqref{eq:DiffInc}).
}
\subsection{Scenario optimization}
We first introduce a data-driven Lyapunov-like stability problem in the spirit of the scenario approach~\cite{ART:C10}. For a fixed $R>0$, let us define $Z: = \bS_R \times \M$; from now on, we consider the uniform probability measure $\mu$ on $Z$, \matteo{see Appendix \ref{app:prob} for the formal definition.} Given a data set of $N$ samples, $\omega_N\coloneqq\{(x_i,\sigma_i) \in \bS_R \times \M: i = 1,2,\cdots, N \}$, we define the following \emph{sampled problem}:
\begin{subequations}\label{eqn:mathcalPomegaN}
\begin{align}
\mathcal{P}(\omega_N): \quad  &\gamma(\omega_N): = \min_{\gamma\ge 0, P} \gamma\\
\textrm{s.t.} \quad  & (A_\sigma x+b_{\sigma})^\top P (A_\sigma x+b_{\sigma}) \le \gamma^2 x^\top Px, \nonumber\\
&\qquad  \forall (x,\sigma) \in \omega_N, \label{eqn:Asigmaxbsigma}\\
& P \succ 0, P = P^\top.\label{eqn:P_geq_0}
\end{align}
\end{subequations}
\matteo{Note that Problem \eqref{eqn:mathcalPomegaN} could have multiple solutions, due to the fact that the objective function does not explicitly depend on $P$. On the other hand, from now on, we suppose to have a \virgolette{tie-breaking rule} which allows us to select one particular solution. For more general discussion related to uniqueness of solutions and tie-breaking criteria we refer to \cite{campi2018general}. The particular rule considered in this paper is introduced in what follows (Problem \eqref{eqn:second_step_optim_prob}).
Thus, from now on, with an abuse of notation, we denote with $(\gamma(\omega_N),P(\omega_N))$ \emph{the} solution to Problem~(\ref{eqn:mathcalPomegaN}) (chosen by the particular tie-breaking rule).} In order to derive probabilistic guarantees, we recall the definition of \emph{support subsamples} in \cite[Definition 2]{campi2018general}.
\zheming{
\begin{defn}\label{def:somega}
A subsample $\omega'_N \subseteq \omega_N$ is a \emph{support subsample} of $\mathcal{P}(\omega_N)$ if $(\gamma(\omega'_N),P(\omega'_N)) = (\gamma(\omega_N),P(\omega_N))$. 
\end{defn}
We then define:
\begin{align}
    s(\omega_N):= & \min_{\omega'_N \subseteq \omega_N} |\omega'_N| \nonumber \\
    \textrm{ s.t. } & \omega'_N \textrm{ is a support subsample} \label{eqn:somegaN}
\end{align}
where $|\omega'_N|$ is the cardinality of $\omega'_N$.
}

At this point, we adapt the chance-constrained Theorem in~\cite{ART:C10} to our stability problem as follow.

\zheming{
\begin{thm}[Theorem 1 of~\cite{campi2018general}]\label{thm:scenario}
For any given $N\in \N$, let $\omega_N$ be independent and identically distributed (i.i.d) with respect to the probability measure $\mu$ on $Z$. Consider the problem $\mathcal{P}(\omega_N)$ given in~(\ref{eqn:mathcalPomegaN}) with the solution being $(\gamma(\omega_N),P(\omega_N))$. Then, for any $\beta\in (0,1)$, the following holds:
\begin{align}
    \mu^{N}\{ \omega_N \in Z^N: \mu(V(\omega_N)) > \varepsilon(s(\omega_N)) \} \le \beta
\end{align}
where $V(\omega_N) \coloneqq \{(x,\sigma)\in Z: \|A_\sigma x+b_{\sigma}\|_{P(\omega_N)} > \gamma(\omega_N) \|x\|_{P(\omega_N)}\}$, $s(\omega_N)$ defined in (\ref{eqn:somegaN}), and $\varepsilon: \{0,1,\cdots, N\}\rightarrow [0,1]$ is a function given as:
\begin{align}\label{eqn:varepsilonk}
\varepsilon(k):=\begin{cases} 
      1 & \textrm{if } k=N; \\
      1-\sqrt[N-k]{\frac{\beta}{N {N \choose k}}} & 0\le k<N.
   \end{cases}
\end{align} 
\end{thm}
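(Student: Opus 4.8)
The plan is to obtain the statement as a direct specialization of the general (nonconvex) scenario theory of~\cite{campi2018general} to the sampled program $\mathcal P(\omega_N)$ of~\eqref{eqn:mathcalPomegaN}. First I would set up the dictionary between the two settings: the scenario space is $Z=\bS_R\times\M$ with the product probability measure $\mu$; the decision variable is the pair $(\gamma,P)$ with $\gamma\in\R_{\geq 0}$ and $P=P^\top\succ 0$; to each realization $(x,\sigma)\in Z$ we attach the constraint~\eqref{eqn:Asigmaxbsigma}, i.e. $(A_\sigma x+b_\sigma)^\top P(A_\sigma x+b_\sigma)\le\gamma^2 x^\top Px$, next to the sample-independent constraint~\eqref{eqn:P_geq_0}; and the decision algorithm is the map $\omega_N\mapsto(\gamma(\omega_N),P(\omega_N))$ that solves~\eqref{eqn:mathcalPomegaN} and then applies the tie-breaking rule of~\eqref{eqn:second_step_optim_prob}. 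I would emphasize that, since the cost $\gamma$ does not involve $P$ and~\eqref{eqn:Asigmaxbsigma} is bilinear in $(\gamma^2,P)$, the classical convex scenario bound of~\cite{ART:C10} is not directly applicable, so the \emph{general} theory of~\cite{campi2018general}, which merely requires a single-valued (and measurable) solution map, is the right instrument.

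Next I would verify the hypotheses of~\cite[Theorem 1]{campi2018general}. The samples are i.i.d.\ w.r.t.\ $\mu$ by assumption. Feasibility holds for every $\omega_N$: taking $P=I$ and any $\gamma\ge\max_{(x,\sigma)\in\omega_N}|A_\sigma x+b_\sigma|/|x|$ — a finite quantity, because $x\in\bS_R$ forces $|x|=R>0$ — gives a feasible point, so the optimal value is attained and, together with~\eqref{eqn:second_step_optim_prob}, the solution map is everywhere defined and single-valued. I would then note that the \emph{support subsample} of Definition~\ref{def:somega} and the integer $s(\omega_N)$ of~\eqref{eqn:somegaN} coincide with \cite[Definition 2]{campi2018general} and its cardinality, and that the violation set $V(\omega_N)=\{(x,\sigma)\in Z:\|A_\sigma x+b_\sigma\|_{P(\omega_N)}>\gamma(\omega_N)\|x\|_{P(\omega_N)}\}$ is precisely the set of scenarios violated by the returned decision, with $\mu(V(\omega_N))$ its violation probability.

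With the correspondence in place, \cite[Theorem 1]{campi2018general} yields, for any $\beta\in(0,1)$ and any admissible $\varepsilon:\{0,\dots,N\}\to[0,1]$ — meaning $\varepsilon(N)=1$ and $\sum_{k=0}^{N-1}\binom{N}{k}\bigl(1-\varepsilon(k)\bigr)^{N-k}=\beta$ — the bound $\mu^N\{\mu(V(\omega_N))>\varepsilon(s(\omega_N))\}\le\beta$. The last step is to check that the explicit $\varepsilon$ in~\eqref{eqn:varepsilonk} is admissible: for $0\le k<N$ one has $1-\varepsilon(k)=\bigl(\tfrac{\beta}{N\binom{N}{k}}\bigr)^{1/(N-k)}$, hence $\binom{N}{k}\bigl(1-\varepsilon(k)\bigr)^{N-k}=\tfrac{\beta}{N}$, and summing over $k=0,\dots,N-1$ gives $\beta$; combined with $\varepsilon(N)=1$ this is exactly the required identity, and substitution gives the claim.

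The statement carries no genuinely new mathematical content — it is a transcription of a known result — so the only real care needed, and the place I would concentrate the write-up, is in the points of the second paragraph: confirming that~\eqref{eqn:mathcalPomegaN} together with the tie-breaking rule induces a well-defined measurable decision map, and that no degeneracy is created by the objective ignoring $P$, so that the nonconvex scenario template of~\cite{campi2018general} applies verbatim.
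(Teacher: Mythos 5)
Your proposal is correct and matches the paper's treatment: the paper gives no proof at all for this statement, presenting it purely as a citation of Theorem 1 of~\cite{campi2018general} applied to the sampled program~\eqref{eqn:mathcalPomegaN}. Your verification of the hypotheses (feasibility via $P=I$, single-valuedness via the tie-breaking rule~\eqref{eqn:second_step_optim_prob}, and the admissibility check $\sum_{k=0}^{N-1}\binom{N}{k}\bigl(1-\varepsilon(k)\bigr)^{N-k}=\beta$) is accurate and in fact supplies more detail than the paper itself does.
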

}


\zheming{
\begin{rem}
As shown in \cite{ART:BJW21}, with a tie-breaking rule and an additional constraint $\|P\|_F \le C$ for some large $C>0$ ($\|P\|_F$ denotes the Frobenius norm of $P$), the quantity $s(\omega_N)$ as given in (\ref{eqn:somegaN}) is always bounded from above by $d = \frac{n(n+1)}{2}$, which allows to have a better bound in Theorem \ref{thm:scenario} (see Remark 4 of \cite{campi2018general}) by choosing
\begin{align}\label{eqn:varepsilonkd}
\varepsilon(k):=\begin{cases}
      1 & \textrm{if } k\ge d+1; \\
      1-\sqrt[N-k]{\frac{\beta}{(d+1) {N \choose k}}} & \textrm{otherwise}.
   \end{cases}
\end{align} 
\end{rem}
In some situations solving Problem (\ref{eqn:mathcalPomegaN}) may lead to an ill-conditioned $P(\omega_N)$. This may lead to  deteriorate the quality of the bounds that we will obtain below. To overcome this issue, we adopt the following tie-breaking rule:
\begin{subequations}\label{eqn:second_step_optim_prob}
\begin{align}
\mathcal{Q}(\omega_N,\gamma): \quad  & P(\omega_N,\gamma): = \arg\displaystyle\min_{P,\alpha} \alpha + c\|P\|^2_F\\
\textrm{s.t.}\quad & \text{\eqref{eqn:Asigmaxbsigma}-\eqref{eqn:P_geq_0}}\\
& \alpha I\succeq P \succeq I
\end{align}
\end{subequations}
where $c>0$ is a weighting parameter.} Also, notice that problem $\mathcal{P}(\omega_N)$ in~\eqref{eqn:mathcalPomegaN} reduces to a generalized eigenvalue problem (GEVP), which can be efficiently solved by off-the-shelf tools (see~\cite{boyd1994linear}).

Departing from Theorem \ref{thm:scenario}, we provide two probabilistic stability certificates, each one with its own advantages/drawbacks.
More specifically:
\begin{itemize}
    \item The first stability certificate, presented in Subsection~\ref{subsec:FirstMethod}, requires an a-priori deterministic uniform upper bound on the norms of the vectors $b_1,\dots, b_M$. The estimate then takes into account this information, providing an upper bound on the joint spectral radius of $\cA$ which is eventually tight, if the user has the freedom to sample initial conditions from a sphere with arbitrary large radius $R>0$,
    \item The second one, presented in Subsection~\ref{subsec:SecondMethod},  makes use of Lemma~\ref{lemma:CoreLemma}. It does not require any \textit{a priori} information on the system but, on the other hand, its probabilistic estimation of the JSR of $\cA$ will be affected by the condition number of $P(\omega_N)\succ0$, solution to~\eqref{eqn:mathcalPomegaN}.
\end{itemize}

\begin{rem}[Trajectories of arbitrary length]
Our approach can be generalized considering trajectories of arbitrary length $l\in N$, i.e. observing $N$ pairs $(x_{0,i},x_{l,i})_{1\leq i\leq N}$, with $x_{0,i}\in \bS_R$ and $x_{l,i}$ the state of the solution of~\eqref{eq:Switchingsystems} after $l$ steps, corresponding to a randomly sampled sequence of modes $(j_1,\dots,j_l)\in \M^l$. This, recalling~\eqref{eq:solution}, allows us to provide probabilistic upper bound on the joint spectral radius of the set $\cA^l\coloneqq\{A_{j_1}\cdot\cdot\cdot A_{j_l}\,\vert\,A_{j_i}\in \cA\}$, i.e. the set all the products of length $l$ of matrices in $\cA$. Since $\rho(\cA^l)=\rho(\cA)^l$ (see~\cite[Section 1.2.1]{Jung09}) this approach will improve our estimation of the joint spectral radius of $\cA$, as done for example in~\cite{KenBal19}. We decided to present our results for trajectories of length $1$, in order to keep the notation and the developments less convoluted. \hfill $\triangle$
\end{rem}

\subsection{First stability certificate}\label{subsec:FirstMethod}
Suppose that an upper bound on $\max_{i\in \M}|b_i|$ is available and let it be denoted as $B$, i.e., $B\ge \max_{i\in \M}|b_i|$. With this upper bound, the following lemma can be derived from Theorem~\ref{thm:scenario}.
\zheming{
\begin{lemma}\label{lem:Vs}
For any given $N\in \N$, let $\omega_N$ be i.i.d with respect to the probability measure $\mu$ on $Z$. Given the solution $(\gamma(\omega_N),P(\omega_N))$ of $\mathcal{P}(\omega_N)$ defined as in~(\ref{eqn:mathcalPomegaN}), let us define:
\begin{align}
    \overline{\gamma}(\omega_N) \coloneqq \gamma(\omega_N) +  \frac{B}{R} \sqrt{\kappa(P(\omega_N))}
\end{align}
where $B \ge \max_{i\in \M}\|b_i\|$. Then, for any $\beta\in (0,1)$, the following holds:
\begin{align}
    \mu^{N}\{ \omega_N \in Z^N: \mu(V^*(\omega_N)) > \varepsilon(s(\omega_N)) \} \le \beta
\end{align}
where $V^*(\omega_N) \coloneqq \{(x,\sigma)\in Z: \|A_\sigma x\|_{P(\omega_N)} > \overline{\gamma}(\omega_N) \|x\|_{P(\omega_N)}\}$, $s(\omega_N)$ is defined in (\ref{eqn:somegaN}), and $\varepsilon: \{0,1,\cdots, N\}\rightarrow [0,1]$ is a function defined as in (\ref{eqn:varepsilonk}).
\end{lemma}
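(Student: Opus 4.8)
The plan is to deduce Lemma~\ref{lem:Vs} from Theorem~\ref{thm:scenario} by a pointwise comparison of the ``bad sets'' $V(\omega_N)$ and $V^*(\omega_N)$. The key observation is that the linear term $A_\sigma x$ and the affine term $A_\sigma x + b_\sigma$ differ by $b_\sigma$, whose $P$-norm is controlled by $|b_\sigma|\le B$ together with $\|b_\sigma\|_{P}\le \sqrt{\lambda_{\max}(P)}\,|b_\sigma|$, while $\|x\|_P\ge \sqrt{\lambda_{\min}(P)}\,|x| = \sqrt{\lambda_{\min}(P)}\,R$ for $x\in\bS_R$. So first I would fix $\omega_N$ and abbreviate $P=P(\omega_N)$, $\gamma=\gamma(\omega_N)$.

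\textbf{Step 1: a deterministic inclusion.} I claim that $V^*(\omega_N)\subseteq V(\omega_N)$ for every $\omega_N$, equivalently that $(x,\sigma)\notin V(\omega_N)$ implies $(x,\sigma)\notin V^*(\omega_N)$. Suppose $\|A_\sigma x+b_\sigma\|_P\le \gamma\|x\|_P$. By the triangle inequality for $\|\cdot\|_P$,
\[
\|A_\sigma x\|_P \le \|A_\sigma x+b_\sigma\|_P + \|b_\sigma\|_P \le \gamma\|x\|_P + \|b_\sigma\|_P.
\]
Now bound $\|b_\sigma\|_P \le \sqrt{\lambda_{\max}(P)}\,|b_\sigma| \le \sqrt{\lambda_{\max}(P)}\,B$, and, since $(x,\sigma)\in Z$ forces $x\in\bS_R$, write $|x| = R = R\cdot\frac{|x|}{R}$ and use $\|x\|_P \ge \sqrt{\lambda_{\min}(P)}\,|x|$, i.e. $R \le \frac{\|x\|_P}{\sqrt{\lambda_{\min}(P)}}$. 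Hence
\[
\|b_\sigma\|_P \le \sqrt{\lambda_{\max}(P)}\,B = \frac{B}{R}\,\sqrt{\lambda_{\max}(P)}\,R \le \frac{B}{R}\,\sqrt{\lambda_{\max}(P)}\,\frac{\|x\|_P}{\sqrt{\lambda_{\min}(P)}} = \frac{B}{R}\sqrt{\kappa(P)}\,\|x\|_P.
\]
Combining, $\|A_\sigma x\|_P \le \bigl(\gamma + \tfrac{B}{R}\sqrt{\kappa(P)}\bigr)\|x\|_P = \overline{\gamma}(\omega_N)\|x\|_P$, so $(x,\sigma)\notin V^*(\omega_N)$. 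This proves the inclusion.

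\textbf{Step 2: transfer the measure bound.} Since $V^*(\omega_N)\subseteq V(\omega_N)$, monotonicity of $\mu$ gives $\mu(V^*(\omega_N))\le \mu(V(\omega_N))$ for every $\omega_N$, hence the event $\{\mu(V^*(\omega_N)) > \varepsilon(s(\omega_N))\}$ is contained in $\{\mu(V(\omega_N)) > \varepsilon(s(\omega_N))\}$. Applying Theorem~\ref{thm:scenario} (note that $s(\omega_N)$, $\varepsilon$, the solution $(\gamma(\omega_N),P(\omega_N))$ and the tie-breaking rule are exactly the same objects), we get $\mu^N\{\mu(V^*(\omega_N)) > \varepsilon(s(\omega_N))\} \le \mu^N\{\mu(V(\omega_N)) > \varepsilon(s(\omega_N))\} \le \beta$, which is the claim.

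I do not expect a genuine obstacle here; the argument is essentially a one-line perturbation estimate wrapped in a monotonicity argument. The only points requiring care are bookkeeping ones: making sure the same $\varepsilon$ and the same $s(\omega_N)$ from Theorem~\ref{thm:scenario} apply verbatim (they do, because $V^*$ is defined using the same optimal solution and we are not re-running the scenario program), and correctly invoking $\sqrt{\lambda_{\min}(P)}\,|x|\le\|x\|_P\le\sqrt{\lambda_{\max}(P)}\,|x|$ in the right directions. One should also note the harmless notational clash between $\max_i|b_i|$ and $\max_i\|b_i\|$ in the statement (Euclidean norm is meant); this does not affect the argument.
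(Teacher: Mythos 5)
Your proposal is correct and follows essentially the same route as the paper: you establish the deterministic inclusion $V^*(\omega_N)\subseteq V(\omega_N)$ via the triangle inequality, the bound $\|b_\sigma\|_{P}\le B\sqrt{\lambda_{\max}(P)}$, and the lower bound $\|x\|_{P}\ge R\sqrt{\lambda_{\min}(P)}$ on $\bS_R$, then invoke Theorem~\ref{thm:scenario} by monotonicity of $\mu$. The only cosmetic difference is that you argue the contrapositive ($(x,\sigma)\notin V \Rightarrow (x,\sigma)\notin V^*$) while the paper argues directly from $(x,\sigma)\in V^*$; the estimates are identical.
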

}
\begin{proof}
First, we show that $V^*(\omega_N) \subseteq V(\omega_N)$ for any $\omega_N$. Consider any $(x,\sigma)\in V^*(\omega_N)$, we have
\begin{align*}
    &\|A_\sigma x+b_{\sigma}-b_{\sigma}\|_{P(\omega_N)} > \overline{\gamma}(\omega_N)\|x\|_{P(\omega_N)} \Rightarrow  \\
    &\|A_\sigma x+b_{\sigma}\|_{P(\omega_N)} +B\sqrt{\lambda_{\max}(P(\omega_N))}\\
    &> \left( \gamma(\omega_N) +  \frac{B}{R} \sqrt{\kappa(P(\omega_N))} \right)\|x\|_{P(\omega_N)} \Rightarrow  \\
    &\|A_\sigma x+b_{\sigma}\|_{P(\omega_N)} +B\sqrt{\lambda_{\max}(P(\omega_N))}\\
    &> \gamma(\omega_N)\|x\|_{P(\omega_N)}  + \frac{B}{R} \sqrt{\kappa(P(\omega_N))} R \sqrt{\lambda_{\min}(P(\omega_N))}\\
    &\Rightarrow \|A_\sigma x+b_{\sigma}\|_{P(\omega_N)} > \gamma(\omega_N)\|x\|_{P(\omega_N)}
\end{align*}
This means that $(x,\sigma)\in V(\omega_N)$ from the definition of $V(\omega_N)$. Thus, $V^*(\omega_N) \subseteq V(\omega_N)$. As a result, $\mu(V^*(\omega_N)) > \varepsilon$ implies $\mu(V(\omega_N)) > \varepsilon$. Then, from Theorem~\ref{thm:scenario}, the statement holds.
\end{proof}
Before proceeding, we need to introduce the following notation. Consider any subset of the sphere $\wt S\subset \bS_R$ with fixed spherical measure $\sigma^{n-1}(\wt S)=\varepsilon\leq 1$ (see Appendix~\ref{app:prob} for the definition); we want to characterize the radius of the biggest ball contained in $\co(\bS_R\setminus\wt S)$. More precisely, we define the function $\delta:[0,1]\to [0,1]$ as
\begin{equation}\label{eq:FunctionDelta}
\delta(\varepsilon)\coloneqq\sup\left\{s\leq 1\; \Bigg\vert\;\begin{aligned} &\bS_{sR}\subset \co(\bS_R\setminus \wt S),\\&\forall \wt S\subset\bS_R,\;\text{s.t. }\sigma^{n-1}(\wt S)=\varepsilon\end{aligned}\right\}.
\end{equation}
This function $\delta$ will be used in what follows, and it is studied, both from a theoretical and numerical point of view, in~\cite[Proposition 13]{KenBal19}, to which we refer. It is clear that $\delta$ is strictly decreasing, $\delta(0)=1$ and $\delta(1)=0$.

Lemma \ref{lem:Vs} allows us to apply Theorem 15 in~\cite{KenBal19} to affine systems and derive a stability guarantee, as shown below.
\zheming{
\begin{thm}\label{thm:first_bound}
For any given $N\in \N$, let $\omega_N$ be i.i.d with respect to the probability measure $\mu$ on $Z$. Then, for any $\beta \in (0,1)$, we have the following:
\begin{align}
\mu^{N}\left\{ \omega_N \in Z^N: \rho(\cA) \le \bar\rho_1(\omega_N) \right\} \ge 1- \beta
\end{align}
where
\begin{equation}\label{eqn:rhobar1}
    \bar\rho_1(\omega_N) = \frac{\gamma(\omega_N) +  \frac{B}{R} \sqrt{\kappa(P(\omega_N))}}{\sqrt{\delta(M\overline{\kappa}(P(\omega_N))\varepsilon(s(\omega_N)))} },
\end{equation}
$s(\omega_N)$ defined in (\ref{eqn:somegaN}), $\varepsilon: \{0,1,\cdots, N\}\rightarrow [0,1]$ is a function defined as in (\ref{eqn:varepsilonk}), and $\overline{\kappa}(P)\coloneqq\sqrt{\frac{\det(P)}{\lambda_{\min}(P)^n}}$.
\end{thm}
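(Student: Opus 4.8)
The plan is to combine Lemma~\ref{lem:Vs} with the deterministic geometric argument underlying Theorem~15 of~\cite{KenBal19}. First I would fix $\omega_N$ in the (high-probability) event that $\mu(V^*(\omega_N)) \le \varepsilon(s(\omega_N))$, which Lemma~\ref{lem:Vs} guarantees with probability at least $1-\beta$; on this event we know that the set of ``bad'' directions on the sphere $\bS_R$ for the linear dynamics, namely those $x$ with $\|A_\sigma x\|_{P(\omega_N)} > \overline{\gamma}(\omega_N)\|x\|_{P(\omega_N)}$ for some $\sigma$, has $\mu$-measure at most $\varepsilon(s(\omega_N))$. The next step is to translate this $\mu$-measure bound (which mixes the uniform measure on $\bS_R$ with the uniform measure on $\M$) into a spherical-measure bound: since $\mu$ is the product of the uniform measures on $\bS_R$ and on $\M$, and a point is ``bad'' if it is bad for \emph{at least one} of the $M$ modes, a union bound over modes shows that the set of $x\in\bS_R$ that are bad for some fixed mode has spherical measure at most $M\,\varepsilon(s(\omega_N))$ — and in fact one inflates this by $\overline{\kappa}(P(\omega_N))$ to pass from the ellipsoidal level sets of $\|\cdot\|_{P(\omega_N)}$ back to the Euclidean sphere, which is precisely where the factor $M\overline{\kappa}(P(\omega_N))\varepsilon(s(\omega_N))$ in~\eqref{eqn:rhobar1} comes from.

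Once we have that, for each mode $A\in\cA$, the set $\wt S_A\subset\bS_R$ of directions where $\|Ax\|_{P(\omega_N)}>\overline{\gamma}(\omega_N)\|x\|_{P(\omega_N)}$ has spherical measure at most $M\overline{\kappa}(P(\omega_N))\varepsilon(s(\omega_N))$, I would invoke the definition of $\delta$ in~\eqref{eq:FunctionDelta}: the ball $\bS_{sR}$ with $s=\delta(M\overline{\kappa}(P(\omega_N))\varepsilon(s(\omega_N)))$ (or rather its convex hull, the ball of radius $sR$) is contained in $\co(\bS_R\setminus\wt S_A)$. Since $x\mapsto \|Ax\|_{P(\omega_N)} - \overline{\gamma}(\omega_N)\|x\|_{P(\omega_N)}$ is... not convex, but the bound $\|Ax\|_{P(\omega_N)}\le \overline{\gamma}(\omega_N)\|x\|_{P(\omega_N)}$ holds on all of $\bS_R\setminus\wt S_A$ and extends by homogeneity and by the parallelogram/convexity argument of Lemma~\ref{lemma:CoreLemma}(1)-type reasoning (here using that $\|\cdot\|_{P(\omega_N)}$ comes from an inner product) to the whole convex hull — more precisely, the function $x \mapsto \|Ax\|_{P}^2 - \overline{\gamma}^2\|x\|_P^2$ is a quadratic form, so if it is $\le 0$ on a spanning set of the sphere in enough directions one can push it to a ball; the cleanest route is: the estimate holds at every point of a set whose convex hull contains $\bS_{sR}$, hence at every point $\hat x\in\bS_{sR}$ write $\hat x$ as a convex combination and use that $\|A\cdot\|_P$ is a norm while $\|\cdot\|_P$ restricted to $\bS_{sR}$ is constant, so $\|A\hat x\|_P \le \sum\lambda_j\|A x_j\|_P \le \overline{\gamma}\sum\lambda_j\|x_j\|_P \le \overline{\gamma}\max_j\|x_j\|_P$, then rescale. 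By homogeneity this yields $\|Ax\|_{P(\omega_N)}\le \frac{\overline{\gamma}(\omega_N)}{s}\|x\|_{P(\omega_N)}$ for all $x\in\R^n$ and all $A\in\cA$, i.e. $\|\cdot\|_{P(\omega_N)}$ is a contractive norm for $\cA$ with factor $\overline{\gamma}(\omega_N)/\sqrt{\delta(\cdots)}$ — wait, the square root: one must be careful whether the factor entering $\delta$ gives $s$ or $\sqrt{s}$; comparing with~\eqref{eqn:rhobar1} the intended statement is $\|Ax\|_P \le \bar\rho_1(\omega_N)\|x\|_P$, so the radius bound must produce the square root of $\delta$, which is consistent with the quadratic-form version of the argument (working with $\|Ax\|_P^2$ and $\|x\|_P^2$ and the convex hull of the sphere of \emph{squared} radius). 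Finally, from the property of extremal/contractive norms recalled in Lemma~\ref{lemma:ExtremalNorms} (Appendix~\ref{app:jsr}), existence of a norm $v$ with $v(Ax)\le c\,v(x)$ for all $A\in\cA$ implies $\rho(\cA)\le c$, giving $\rho(\cA)\le \bar\rho_1(\omega_N)$ on the good event, which has probability $\ge 1-\beta$.

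The main obstacle I anticipate is the bookkeeping in the second step: correctly tracking how the change of norm (Euclidean versus $\|\cdot\|_{P(\omega_N)}$) distorts both the spherical measure of the bad set — producing the $\overline{\kappa}(P(\omega_N))$ factor, which is the ratio of volumes/measures between the ellipsoid and the ball — and the radius $s$ coming out of $\delta$, and then matching the exponents so that the square root in~\eqref{eqn:rhobar1} is exactly right. The probabilistic part is essentially immediate given Lemma~\ref{lem:Vs}, and the ``contractive norm $\Rightarrow$ JSR bound'' step is a cited fact; everything nontrivial is in the deterministic geometry, which is why the proof can legitimately be phrased as ``apply Theorem~15 of~\cite{KenBal19} with $V$ replaced by $V^*$,'' the point of Lemma~\ref{lem:Vs} being precisely that $V^*$ (the linear-part violation set) satisfies the same measure bound as $V$ does in the linear setting of~\cite{KenBal19}.
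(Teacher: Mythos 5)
Your proposal matches the paper's proof, which is literally the one-sentence observation that the theorem follows from Lemma~\ref{lem:Vs} together with Theorem~15 of \cite{KenBal19}; your closing remark that the argument amounts to ``apply Theorem~15 of \cite{KenBal19} with $V$ replaced by $V^*$'' is exactly the intended reduction, Lemma~\ref{lem:Vs} being precisely the device that transfers the measure bound from the affine violation set to the linear one. Your additional attempt to unpack the internals of the cited theorem (the $\overline{\kappa}(P)$ measure-distortion factor and the placement of the square root on $\delta$) goes beyond what the paper provides, and the bookkeeping uncertainty you flag there lives entirely inside the black box of \cite[Theorem 15]{KenBal19}, so it does not affect the correctness of the reduction.
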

}
\begin{proof}
This result is a direct consequence of Theorem 15 in~\cite{KenBal19} and Lemma~\ref{lem:Vs}.
\end{proof}
From Theorem \ref{thm:first_bound}, we observe that the bound in (\ref{eqn:rhobar1}) converges to the one in the linear case in \cite[Theorem 15]{KenBal19} as $R$ goes to infinity.

\subsection{Second Stability Certificate}\label{subsec:SecondMethod}
In this subsection we propose an alternative method to provide probabilistic upper bound of the joint spectral radius $\rho(\cA)$ starting from a solution of~\eqref{eqn:mathcalPomegaN}, but without requiring any \textit{a priori} upper bound on the norms of $b_1,\dots, b_M$.

Again, suppose we have a solution $(\gamma,P)\coloneqq(\gamma(\omega_N), P(\omega_N))$ to~\eqref{eqn:mathcalPomegaN}. 
\zheming{
Rephrasing Theorem~\ref{thm:scenario}, for any fixed $\beta>0$, with probability no smaller than $1-\beta$, we have
\begin{equation}\label{eq:almostInvaraince}
\|A_ix+b_i\|_P\leq  \gamma \|x\|_P\,\,\,\forall x\in \bS_R\setminus \wt S,\,\,\,\forall \,i\in \M,
\end{equation}
for some $\wt S\subset \bS_R$ with spherical measure $\sigma^{n-1}(\wt S)\leq \varepsilon(s(\omega_N)) M$, see~\cite[Corollary 11]{KenBal19}.
}
From the probabilistic \virgolette{almost invariance} presented in~\eqref{eq:almostInvaraince}, we want to derive an upper bound for $\rho(\cA)$, we thus need the following result.

\begin{lemma}\label{lemma:Convexity}
Suppose that~\eqref{eq:almostInvaraince} holds for a certain set $\wt S\subset \bS_R$. Let us define $L\coloneqq\max_{y\in \bS_R}\|y\|_{P}=\sqrt{\lambda_{max}(P)}R$. It holds that
\[
\begin{aligned}
\|A_ix+b_i\|_P\leq  \gamma L\,\,\,\forall x\in \co(\bS_R\setminus \wt S),\,\,\,\forall \,i\in \M.
\end{aligned}
\]
\end{lemma}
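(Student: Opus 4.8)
The plan is to reduce the statement over the convex hull $\co(\bS_R\setminus\wt S)$ to the hypothesis~\eqref{eq:almostInvaraince}, which only gives control on the sphere $\bS_R\setminus\wt S$, by exploiting the affinity of the map $x\mapsto A_ix+b_i$ together with the convexity of the norm $\|\cdot\|_P$. First I would fix an arbitrary $i\in\M$ and an arbitrary point $z\in\co(\bS_R\setminus\wt S)$. By Carath\'eodory's theorem, $z=\sum_{k=1}^{n+1}\lambda_k y_k$ for some $\lambda\in\Lambda^{n+1}$ and points $y_k\in\bS_R\setminus\wt S$. Because the map is affine, $A_iz+b_i=\sum_{k=1}^{n+1}\lambda_k(A_iy_k+b_i)$, so by convexity of $\|\cdot\|_P$ (it is a norm) and the triangle/Jensen inequality, $\|A_iz+b_i\|_P\le\sum_{k=1}^{n+1}\lambda_k\|A_iy_k+b_i\|_P$.

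Next I would apply~\eqref{eq:almostInvaraince} to each $y_k\in\bS_R\setminus\wt S$, obtaining $\|A_iy_k+b_i\|_P\le\gamma\|y_k\|_P$ for every $k$. Since each $y_k\in\bS_R$, we have $\|y_k\|_P\le\max_{y\in\bS_R}\|y\|_P=L$, and the explicit value $L=\sqrt{\lambda_{\max}(P)}\,R$ follows because maximizing $\sqrt{x^\top Px}$ over the Euclidean sphere of radius $R$ is attained along the eigenvector of $P$ associated with $\lambda_{\max}(P)$. Combining these bounds, $\|A_iz+b_i\|_P\le\sum_{k=1}^{n+1}\lambda_k\gamma L=\gamma L$, since the $\lambda_k$ sum to one. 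As $i\in\M$ and $z\in\co(\bS_R\setminus\wt S)$ were arbitrary, this is exactly the claimed inequality.

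I do not expect a genuine obstacle here: the proof is essentially the observation already used in Lemma~\ref{lemma:ConvexClosure}, namely that an affine map pushes a convex combination to the convex combination of images, specialized to the convex hull of the ``good'' part of the sphere. The only point requiring a line of care is the evaluation of $L$, i.e.\ justifying $\max_{y\in\bS_R}\|y\|_P=\sqrt{\lambda_{\max}(P)}\,R$; this is a standard Rayleigh-quotient computation and can be stated without detailed calculation. One could alternatively phrase the argument directly: the function $g(x):=\max_{i\in\M}\|A_ix+b_i\|_P-\gamma L$ is convex (a max of convex functions), it is nonpositive on $\bS_R\setminus\wt S$, hence nonpositive on its convex hull; but I find the Carath\'eodory route slightly more transparent and it reuses machinery already in the paper.
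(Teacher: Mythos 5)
Your argument is correct and is essentially identical to the paper's proof: a Carath\'eodory decomposition of a point of $\co(\bS_R\setminus\wt S)$, the affine map turning the convex combination into a convex combination of images, the triangle inequality, and then the hypothesis~\eqref{eq:almostInvaraince} plus $\|y_k\|_P\le L$ on each term. No gaps; the remark on evaluating $L$ via the Rayleigh quotient is a harmless addition.
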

\begin{proof}
 Consider $x\in \co(\bS_R\setminus \wt S)$, we can write $x=\sum_{j=1}^{n+1} \lambda_j x_j$, for some $\lambda\in \Lambda^{n+1}$ and some $x_1,\dots, x_{n+1}\in \bS_R\setminus \wt S$. Considering any $i\in \M$, we have 
\[
\begin{aligned}
\|A_ix&+b_i\|_P\hspace{-0.1cm}=\hspace{-0.1cm}\left\|\sum_{j=1}^{n+1} A_i\lambda_jx_j+b_i\right\|_P\hspace{-0.1cm}\!\!=\left\|\sum_{j=1}^{n+1}\lambda_j (A_ix_j+b_i)\right\|_P\\&\leq \sum_{j=1}^{n+1}\lambda_j \|A_ix_j+b_i\|_P\leq \sum_{j=1}^{n+1}\lambda_j \gamma\|x_j\|_P\leq  \gamma L,
\end{aligned}
\]
concluding the proof.
\end{proof}
Now consider $r=R\delta(\varepsilon M)$, where $\delta:[0,1]\to[0,1]$ is defined in~\eqref{eq:FunctionDelta}.
Since $\bS_r\subset \co(\bS_R\setminus \wt S)$, by Lemma~\ref{lemma:Convexity} we have
\[
\|A_ix+b_i\|_P\leq \gamma L= \gamma \sqrt{\lambda_{max}(P)}R,\;\;\;\forall \;x\in \bS_r,\;\;\forall i\in \M.
\]
Thus, for all $x\in \bS_r$ and all $i\in \M$,
\begin{equation}\label{eq:CompleteInvariance}
\begin{aligned}
&\|A_ix+b_i\|_P \\& \leq \gamma \frac{R}{r}\frac{\sqrt{\lambda_{max}(P)}}{\sqrt{\lambda_{min}(P)}} r\sqrt{\lambda_{min}(P)}
\\&= \gamma \frac{R}{r}\frac{\sqrt{\lambda_{max}(P)}}{\sqrt{\lambda_{min}(P)}}\min_{y\in \bS_r}\|y\|_P\leq \gamma \frac{R}{r}\frac{\sqrt{\lambda_{max}(P)}}{\sqrt{\lambda_{min}(P)}}\|x\|_P.
\end{aligned}
\end{equation}
Defining $\wt \gamma\coloneqq\gamma \frac{R}{r}\frac{\sqrt{\lambda_{max}(P)}}{\sqrt{\lambda_{min}(P)}}$, we can thus say that
\begin{equation}\label{eq:BoundonSmallBall}
\|A_ix+b_i\|_P \leq \wt \gamma \|x\|_P,\;\;\;\forall x\in \bS_r,\;\;\forall \;i\in \M.
\end{equation}
This allows us, summarizing all the previous results, to give the following probabilistic upper bound of $\rho(\cA)$.
\zheming{
\begin{thm} \label{thm:SecondMethod}
Given any $\omega_N\in Z^N$ let us denote with  $(\gamma(\omega_N),P(\omega_N))$ the solution to $\mathcal{P}(\omega_N)$ defined in~(\ref{eqn:mathcalPomegaN}). Then, for any given $\beta \in(0,1)$,
\[
\mu^N\left \{\omega_N\in Z^N\,\,:\,\, \rho(\cA)\leq \bar\rho_2(\omega_N)\right \}\geq 1-\beta
\]
with 
\begin{equation}
    \bar\rho_2(\omega_N)= \frac{\gamma(\omega_N)\kappa(P(\omega_N))}{\delta(\varepsilon(s(\omega_N)) M)},
\end{equation}
where $s(\omega_N)$ is defined in (\ref{eqn:somegaN}),  $\varepsilon: \{0,1,\cdots, N\}\rightarrow [0,1]$ is a function defined as in (\ref{eqn:varepsilonk})
and $\kappa(P(\omega_N))\coloneqq\frac{\sqrt{\lambda_{max}(P(\omega_N))}}{\sqrt{\lambda_{min}(P(\omega_N))}}$.
Furthermore, if $\bar \rho_2(\omega_N)\leq 1$, the ellipsoid 
\[
E(\omega_N)\coloneqq\{x\in \R^n\;\vert\;x^\top P(\omega_N)x\leq \tilde L^2\}
\]
with $\tilde L =L\delta(\varepsilon(s(\omega_N)) M)$, is attractive and forward invariant with probability $1-\beta$.
\end{thm}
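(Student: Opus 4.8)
The plan is to condition on the favourable event produced by Theorem~\ref{thm:scenario} and then argue deterministically, stacking on top of the chain of implications already displayed in~\eqref{eq:almostInvaraince}--\eqref{eq:BoundonSmallBall}. Fix $\beta\in(0,1)$ and let $\mathcal{E}\subseteq Z^N$ be the event $\{\omega_N : \mu(V(\omega_N))\le\varepsilon(s(\omega_N))\}$, which by Theorem~\ref{thm:scenario} satisfies $\mu^N(\mathcal{E})\ge 1-\beta$. Abbreviate $(\gamma,P):=(\gamma(\omega_N),P(\omega_N))$ and $\varepsilon:=\varepsilon(s(\omega_N))$, and assume $\varepsilon M<1$ (otherwise $\delta(\varepsilon M)=0$, $\bar\rho_2(\omega_N)=+\infty$, and there is nothing to prove). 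On $\mathcal{E}$, \cite[Corollary 11]{KenBal19} converts the bound on $\mu(V(\omega_N))$ into the ``almost invariance'' property~\eqref{eq:almostInvaraince} for some $\wt S\subset\bS_R$ with $\sigma^{n-1}(\wt S)\le\varepsilon M$; then the computation carried out in~\eqref{eq:CompleteInvariance} and~\eqref{eq:BoundonSmallBall}, with $r:=R\,\delta(\varepsilon M)$, yields
\[
\|A_ix+b_i\|_P\le\wt\gamma\,\|x\|_P\quad\forall\,x\in\bS_r,\ \forall\,i\in\M,\qquad \wt\gamma=\gamma\,\frac{R}{r}\,\sqrt{\frac{\lambda_{\max}(P)}{\lambda_{\min}(P)}}=\bar\rho_2(\omega_N),
\]
the step $\bS_r\subset\co(\bS_R\setminus\wt S)$ being exactly the defining property~\eqref{eq:FunctionDelta} of $\delta$, valid since $\sigma^{n-1}(\wt S)\le\varepsilon M$ and $\delta$ is decreasing.

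For the joint spectral radius bound, I would note that $\bS_r=\partial X$ with $X:=\{x\in\R^n : |x|\le r\}$ a symmetric $C$-set, so the displayed inequality is precisely the hypothesis of Lemma~\ref{lemma:CoreLemma} applied to each affine map $x\mapsto A_ix+b_i$ with $v=\|\cdot\|_P$ and $\rho=\bar\rho_2(\omega_N)$. Item~1 of Lemma~\ref{lemma:CoreLemma} then gives $\|A_ix\|_P\le\bar\rho_2(\omega_N)\|x\|_P$ for every $x\in\R^n$ and every $i\in\M$, i.e.\ $\|\cdot\|_P$ is a $\bar\rho_2(\omega_N)$-contractive norm for $\cA$. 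Submultiplicativity of the operator norm induced by $\|\cdot\|_P$ then forces every length-$k$ product of matrices from $\cA$ to have $\|\cdot\|_P$-norm at most $\bar\rho_2(\omega_N)^k$, hence Euclidean norm at most $C\,\bar\rho_2(\omega_N)^k$ for a fixed $C>0$; taking $k$-th roots and letting $k\to\infty$ gives $\rho(\cA)\le\bar\rho_2(\omega_N)$ (the elementary direction relating contractive norms and the JSR, see Appendix~\ref{app:jsr}). As all of this holds on $\mathcal{E}$, the first assertion follows.

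For the second assertion, assume in addition $\bar\rho_2(\omega_N)<1$ (still on $\mathcal{E}$). Item~3 of Lemma~\ref{lemma:CoreLemma}, with the same $v$, $X$ and $\rho$ --- and noting that its proof already treats all modes $i\in\M$ with a single uniform contraction factor, so the conclusion extends to the difference inclusion~\eqref{eq:DiffInc} --- shows that $K:=\{x : \|x\|_P\le\max_{y\in\partial X}\|y\|_P\}$ is attractive and forward invariant for $x^+\in F(x)$. It remains to recognise $K$ as the ellipsoid $E(\omega_N)$: since $\partial X=\bS_r$,
\[
\max_{y\in\bS_r}\|y\|_P=r\sqrt{\lambda_{\max}(P)}=R\,\delta(\varepsilon M)\,\sqrt{\lambda_{\max}(P)}=L\,\delta(\varepsilon M)=\tilde L,
\]
using $L=\sqrt{\lambda_{\max}(P)}\,R$ from Lemma~\ref{lemma:Convexity}. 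Hence $K=\{x : x^\top P(\omega_N)x\le\tilde L^2\}=E(\omega_N)$, which is therefore attractive and forward invariant on $\mathcal{E}$, i.e.\ with probability at least $1-\beta$.

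The analytic part above is essentially bookkeeping on top of Lemmas~\ref{lemma:Convexity} and~\ref{lemma:CoreLemma}; the point I expect to deserve genuine care --- and where I would concentrate the write-up --- is the boundary value $\bar\rho_2(\omega_N)=1$ permitted by the statement. Item~3 of Lemma~\ref{lemma:CoreLemma} presupposes \emph{strict} contraction, so for $\bar\rho_2(\omega_N)=1$ one cannot conclude attractiveness directly: forward invariance of $E(\omega_N)$ still follows by combining Item~2 of Lemma~\ref{lemma:CoreLemma} with Lemma~\ref{lemma:SublevelSetLyap}, but the ``attractive'' clause really needs $\bar\rho_2(\omega_N)<1$. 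A secondary point of care is purely organizational: keeping all probabilistic conclusions attached to the single event $\mathcal{E}$ so that the $1-\beta$ guarantee is not eroded by any union bound, and disposing at the outset of the vacuous regime $\varepsilon(s(\omega_N))M\ge 1$ in which $\delta$ degenerates.
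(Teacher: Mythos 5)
Your proof follows essentially the same route as the paper's: the published proof is a two-line pointer to Theorem~\ref{thm:scenario}, the chain~\eqref{eq:almostInvaraince}--\eqref{eq:BoundonSmallBall} (via Lemma~\ref{lemma:Convexity} and the definition of $\delta$), and Items 1) and 3) of Lemma~\ref{lemma:CoreLemma}, which is exactly what you reconstruct and expand, including the identification $\max_{y\in\bS_r}\|y\|_{P}=\tilde L$. Your remark on the boundary case is well taken: Item 3) of Lemma~\ref{lemma:CoreLemma} requires strict contraction for attractiveness, so at $\bar\rho_2(\omega_N)=1$ only forward invariance survives --- a minor imprecision in the theorem's ``$\bar\rho_2(\omega_N)\leq 1$'' clause that the paper's proof glosses over.
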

}
\begin{proof}
The first statement follows from~\eqref{eq:BoundonSmallBall} and Theorem~\ref{thm:scenario} and recalling Item 1) of Lemma~\ref{lemma:CoreLemma}. The forward invariance and attractiveness of $E(P\omega_N)$ follows from Item 3) of Lemma~\ref{lemma:CoreLemma}, applied to the norm $\|\cdot\|_{P(\omega_N)}:\R^n\to \R_{\geq 0}$.
\end{proof}
\begin{rem}
The radius $R>0$ of the ball whence the initial conditions are sampled is assumed to be given \textit{a priori} in this work. However, notice that there exists a trade-off in the choice of $R$; very large values can lead to a poor estimation of the forward invariant set whereas $R$ has to be large enough so that $K_\infty$ is contained in the interior of $\bS_R$. The task of determining such $R$ is far from trivial. Indeed, the presence of affine terms can spread $K_\infty$ over the whole state space and, intuitively, trajectories sampled on a sphere $\bS_R$ that evolve outwards can either indicate that the system is unstable or that $K_\infty$ exists but is not contained in the interior of $\bS_R$.  Therefore, it is reasonable to assume an upper bound $\overline R$ on the maximum radius from which one can sample initial conditions and, if for $\bS_{\overline R}$ the sampled problem~\eqref{eqn:mathcalPomegaN} fails to yield a feasible solution for a set of samples $\omega_N$, then item 2) of Lemma~\ref{lemma:CoreLemma} assures that it also fails for all $R<\overline R$ for some set of samples. On the other hand, if problem~\eqref{eqn:mathcalPomegaN} is successfully solved for $R=\overline R$, one can apply a bisection algorithm to find the smallest $R$ for which a solution to problem~\eqref{eqn:mathcalPomegaN} can be found, by drawing new data sets. 
\end{rem}~\\[.5cm]

\section{NUMERICAL EXPERIMENT}
To evaluate the methods for data-driven analysis of switched affine systems developed in the previous sections, we carried out numerical experiments considering academical examples. Let us consider two switched affine systems as in~\eqref{eq:Switchingsystems} denoted by $\cF_1$ and $\cF_2$ and given by
\begin{equation*}
    \mathcal{F}_1\!\!=\!\!\left\{\!\!\left(
    \begin{bmatrix}
      	0.4 & -0.3 \\ 
      	 -0.5 & 0.5
     \end{bmatrix}\!\!,\begin{bmatrix}
      	0.1 \\ 
      	 0.2
     \end{bmatrix}\right)\!\!,\left(
   \begin{bmatrix}
      	-0.3 & -0.1 \\ 
      	 -0.2 & -0.6
     \end{bmatrix}\!\!,\begin{bmatrix}
      	-0.2 \\ 
      	 -0.1
     \end{bmatrix} \right)\!\!
    \right\}
\end{equation*}
\begin{equation*}
    \mathcal{F}_2\!\!=\!\!\left\{\!\!\left(
    \begin{bmatrix}
        0.6 & 0.1 \\ 
       -0.2 & \!\!-0.5
     \end{bmatrix}\!\!,\begin{bmatrix}
      	-0.7 \\ 
      	 -0.7
     \end{bmatrix}\right)\!\!,\left(
   \begin{bmatrix}
      	-0.6 & \!\!-0.1 \\ 
      	 0.2 & 0.5
     \end{bmatrix}\!\!,\begin{bmatrix}
      	0.2 \\ 
      	 -0.8
     \end{bmatrix} \right)\!\!
    \right\}
\end{equation*}
For both systems $N=200$ initial conditions were uniformly sampled on the sphere $\bS_R$ with radius $R=3$ and the iteration defining the system~\eqref{eq:Switchingsystems} was applied once to each initial condition considering randomly sampled indexes $\sigma\in\M$. The obtained data formed the data set $\omega_N$ and the sampled optimization problem~\eqref{eqn:mathcalPomegaN} was successfully solved. Seeking to attain a confidence level of $1-\beta=0.95$,  we chose $\varepsilon=0.0882$ by solving ~\eqref{eqn:varepsilonkd} with $k=d$. Then, the proposed bounds $\bar\rho_1$ and $\bar\rho_2$ on the JSR, defined in Theorems~\ref{thm:first_bound} and~\ref{thm:SecondMethod}, were calculated, by considering $P$ provided by the second optimization problem~\eqref{eqn:second_step_optim_prob}. For the sake of repeatability, this procedure was performed 100 times for different random data sets to allow the evaluation of the average value and the standard deviation of both bounds. For $\cF_1$ we obtained $\bar\rho_1=0.9547\pm0.0065$ and $\bar\rho_2=1.0061\pm0.0070$ while for  $\cF_2$ the obtained bounds were
$\bar\rho_1 = 1.0273 \pm 0.0003$ and $\bar\rho_2=0.9876\pm0.0010$. This illustrates that either Theorem~\ref{thm:first_bound} and~\ref{thm:SecondMethod} can provide a tighter bound for different problems \lucas{and that one may fail to provide stability for cases where the other does not}. For each system, one of the sampled data sets is depicted in Figure~\ref{fig:state_space} along with the invariant set $E(\omega_N)$ ensured for $\mathcal{F}_2$ by Theorem~\ref{thm:SecondMethod}, as $\bar\rho_2<1$.
\begin{figure}
    \centering
    \includegraphics[width=0.49\linewidth]{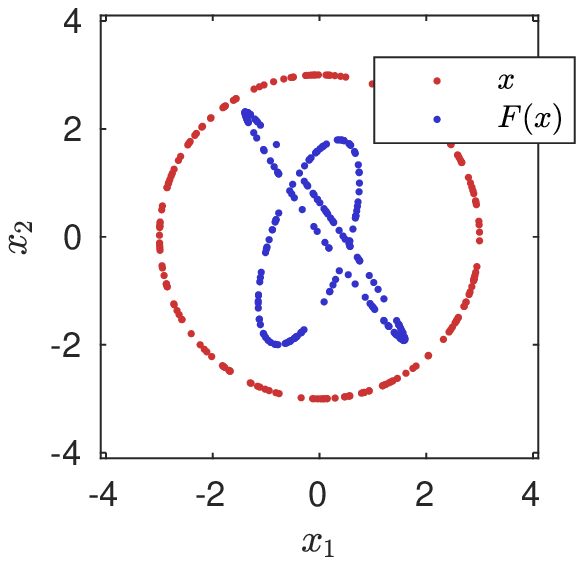}
    \includegraphics[width=0.49\linewidth]{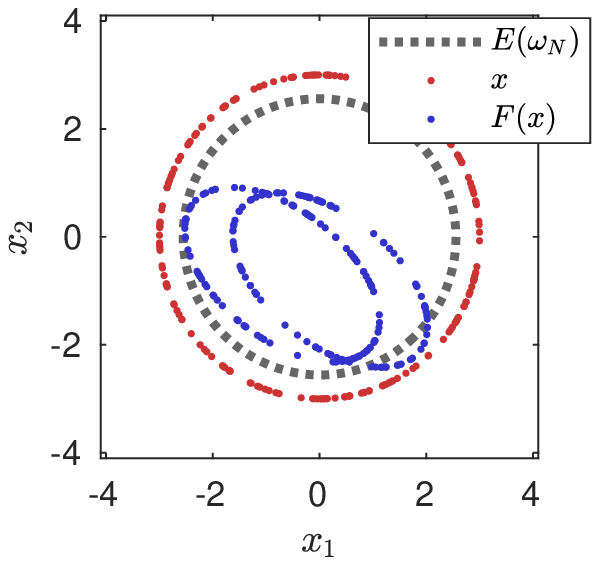}
    \caption{For systems $\mathcal{F}_1$ (left) and $\mathcal{F}_2$ (right),  $N=200$ points $x$ sampled from $\bS_R$, their evaluation of $F(x)$ for randomly chosen $\sigma\in \M$ and the corresponding invariant set $E(\omega_N)$, which is ensured only for $\mathcal{F}_2$ by Theorem~\ref{thm:SecondMethod} (given that $\bar\rho_2>1$ for $\mathcal{F}_1$ and an invariant set could not be ensured).}
    \label{fig:state_space}
\end{figure}

\section{Conclusion}
In this paper we have tackled the data-driven stability analysis problem for discrete-time affine switched systems. In this framework the subsystems composing the overall switched systems do not share a common equilibrium. However, we have shown that switched affine systems enjoy geometric properties, which allow to circumvent this problem. We have proposed two methods giving
probabilistic  certificates  of
stability: the first one necessitates an upper bound on the norm of the independent terms of the affine operators, while the second one does not, but may be less precise.

\appendix
\section{Joint Spectral Radius}\label{app:jsr}
\matteo{In this short section we recall the main definition of \emph{joint spectral radius} and some results used throughout the paper.
\begin{defn}
Consider a bounded set of matrices $\cA\subset \R^{n\times n}$. The \emph{joint spectral radius} (JSR) of $\cA$ is defined by
\begin{equation}
    \rho(\cA):=\lim_{l\to+\infty} \sup\{\|A\|^{\frac{1}{l}}\,\,\vert\,\,A\in \cA^l\}
\end{equation}
where $\cA^l\coloneqq\{A_{j_1}\cdot\cdot\cdot A_{j_l}\,\vert\,A_{j_i}\in \cA\}$ denotes the set all the products of length $l$ of matrices in $\cA$.
\end{defn}
Given $M\in \N$ and a set $\cA:=\{A_1,\dots, A_M\}\subset \R^{n\times n}$, it is well known that the \emph{linear} switching system $x(k+1)=A_{\sigma(k)}x(k)$ is globally asymptotically stable if and only if $\rho(\cA)<1$,  see e.g \cite[Corollary 1.1]{Jung09}.
Moreover, we recall the following result, for the proof see \cite[Proposition 1.4 and Section 2.1]{Jung09}.
\begin{lemma}\label{lemma:ExtremalNorms}
Given a bounded set of matrices $\cA\subset \R^{n\times n}$, for any $\wt \rho> \rho(\cA)$ there exists a \emph{$\wt\rho$-contractive norm} $\|\cdot\|:\R^n\to \R_{\geq 0}$ such that
\[
\|Ax\|\leq \wt \rho \|x\|,\,\,\,\forall\,x\in \R^n,\,\,\forall \,A\in \cA.
\]
Moreover, if the set $\cA$ is compact and irreducible, there exists an \emph{extremal norm} $\|\cdot\|_{\star}:\R^n\to \R_{\geq 0}$ such that
\[
\|Ax\|_{\star}\leq \rho(\cA) \|x\|_{\star},\,\,\,\forall\,x\in \R^n,\,\,\forall \,A\in \cA.
\]
\end{lemma}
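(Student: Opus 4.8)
The plan is to treat the two assertions separately: the $\wt\rho$-contractive norm I would build explicitly by a supremum over products, and the extremal norm I would then obtain from it by a compactness/limiting argument, with irreducibility entering only at the very last step.

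For the $\wt\rho$-contractive norm I first rescale to reduce the contraction factor to $1$. Set $\cB:=\{A/\wt\rho\mid A\in\cA\}$; from the definition of the JSR one checks the scaling law $\rho(\cB)=\rho(\cA)/\wt\rho<1$. Fixing $q\in(\rho(\cB),1)$, the defining limit $\rho(\cB)=\lim_l\sup_{B\in\cB^l}\|B\|^{1/l}$ produces an index $L$ with $\sup_{B\in\cB^l}\|B\|\le q^l$ for all $l\ge L$; combined with boundedness of $\cB$ for the finitely many $l<L$, this makes $C:=\sup_{l\ge 0}\sup_{B\in\cB^l}\|B\|$ finite. I then define
\[
\|x\|_\cA:=\sup_{l\ge 0}\ \sup_{B\in\cB^l}\,|Bx|,
\]
with the convention $\cB^0=\{I\}$. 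Homogeneity and the triangle inequality are inherited from $|\cdot|$, while the chain $|x|\le\|x\|_\cA\le C|x|$ (lower bound from the term $l=0$) shows it is a genuine norm equivalent to the Euclidean one. Contractivity is the essential point but is routine: for $B'\in\cB$ the products $BB'$ with $B\in\cB^l$ all lie in $\cB^{l+1}$, so $\|B'x\|_\cA=\sup_{l\ge 0}\sup_{B\in\cB^l}|BB'x|\le\sup_{l\ge 1}\sup_{C\in\cB^l}|Cx|\le\|x\|_\cA$. Undoing the scaling yields $\|Ax\|_\cA\le\wt\rho\,\|x\|_\cA$ for all $A\in\cA$.

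For the extremal norm I pass to the critical value by a limiting procedure, since the above supremum need not be finite at $\wt\rho=\rho(\cA)$. Choose $\wt\rho_k\downarrow\rho(\cA)$ and, for each $k$, a $\wt\rho_k$-contractive norm $v_k$ from the first part; rescale each so that $\max_{|x|=1}v_k(x)=1$, which is possible by continuity of $v_k$ and compactness of the Euclidean sphere. This normalization gives $0\le v_k(x)\le|x|$ and $|v_k(x)-v_k(y)|\le v_k(x-y)\le|x-y|$, so the family is uniformly bounded and $1$-Lipschitz. By Arzel\`a--Ascoli a subsequence converges uniformly on the sphere, hence (by homogeneity) pointwise everywhere, to a limit $v_\star$; as a pointwise limit of norms, $v_\star$ is a seminorm, and the normalization survives to give $\max_{|x|=1}v_\star(x)=1$, so $v_\star\not\equiv 0$. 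Fixing $A\in\cA$ and $x$ and letting $k\to\infty$ in $v_k(Ax)\le\wt\rho_k v_k(x)$, using $\wt\rho_k\to\rho(\cA)$, yields the contraction estimate $v_\star(Ax)\le\rho(\cA)\,v_\star(x)$.

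The main obstacle is to guarantee that $v_\star$ is not merely a seminorm but positive definite; this is exactly where compactness and irreducibility are indispensable, and where the naive supremum at the critical exponent fails. I would argue via the kernel $W:=\{x\in\R^n\mid v_\star(x)=0\}$, which is a linear subspace (the zero set of a seminorm) and is $\cA$-invariant: if $v_\star(x)=0$ then $v_\star(Ax)\le\rho(\cA)v_\star(x)=0$, so $Ax\in W$ for every $A\in\cA$. Since $v_\star\not\equiv 0$ we have $W\ne\R^n$, so irreducibility forces $W=\{0\}$; hence $v_\star$ vanishes only at the origin and is a genuine norm, and setting $\|\cdot\|_\star:=v_\star$ completes the proof (the degenerate case $\rho(\cA)=0$, which for an irreducible set is trivial, is treated directly). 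I expect the only delicate routine points to be the equicontinuity of the normalized family and the survival of both the normalization and the contraction inequality in the limit; the genuine conceptual content lies entirely in the invariant-subspace dichotomy above.
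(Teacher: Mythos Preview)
The paper does not actually prove this lemma: it states the result and refers to \cite[Proposition~1.4 and Section~2.1]{Jung09} for the proof. Your argument is correct and is essentially the standard one found in that reference. The first part (the supremum-over-products norm after rescaling) is exactly the classical construction, and the second part (normalize, apply Arzel\`a--Ascoli, and use irreducibility to rule out a nontrivial kernel of the limiting seminorm) is the usual route to an extremal norm.

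Two minor remarks. First, your limit should be taken along the Arzel\`a--Ascoli subsequence, not the full sequence $(v_k)$; you clearly intend this but it is worth making explicit when you pass to the limit in $v_k(Ax)\le\wt\rho_k v_k(x)$. Second, you never invoke compactness of $\cA$, only boundedness: this is fine, since the lemma's hypothesis is stronger than what your argument needs for the inequality form $\|Ax\|_\star\le\rho(\cA)\|x\|_\star$. Compactness becomes essential only if one wants the full Barabanov property $\max_{A\in\cA}\|Ax\|_\star=\rho(\cA)\|x\|_\star$, which the lemma as stated does not require.
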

} 
\section{Probabilistic Framework}\label{app:prob}
\zheming{
To be self-contained, we also review the probabilistic framework in \cite[Section 2]{KenBal19}. Consider the sphere $\bS_R$ in $\R^n$, the Borelian $\sigma$-algebra is denoted as $\cB_{\bS_R}$. Given any $S\in \cB_{\bS_R}$, $\bS^{S}:=\{tS: t\in [0,1]\}$ denotes the sector defined by $S$. We then equip $(\bS_R,\cB_{\bS_R})$ with the normalized uniform spherical measure, denoted by $\sigma^{n-1}$:
\begin{align}
    \sigma^{n-1}(S): = \frac{\lambda(\bS^S)}{\lambda(\B_R)},\;\;\;\; \forall S \in \cB_{\bS_R}
\end{align}
where $\B_R$ is the closed ball of radius $R$ in $\R^n$ and $\lambda(\cdot)$ denotes the Lebesgue measure. Similarly, we define the uniform measure on $\M$. Let $\Sigma_{\M}$ denote the $\sigma$-algebra of $\M$. The uniform measure on $(\M,\Sigma_{\M})$ is defined as:
\begin{align}
    \mu_M(s):= \frac{1}{M},\;\;\;\; \forall s\in \M.
\end{align}
The joint uniform measure on $(\bS_R \times \M, \cB_{\bS_R} \bigotimes \Sigma_{\M} )$ is then defined as: $\mu = \sigma^{n-1} \otimes \mu_M$. 

}


\begin{thebibliography}{10}

\bibitem{AhmJunPar14}
A.~A. Ahmadi, R.~M. Jungers, P.~A. Parrilo, and M.~Roozbehani.
\newblock Joint spectral radius and path-complete graph {Lyapunov} functions.
\newblock {\em SIAM Journal on Control and Optimization}, 52(1):687--717, 2014.

\bibitem{AthJun16}
N.~Athanasopoulos and R.~M. Jungers.
\newblock Invariant sets for switching affine systems subject to semi-algebraic
  constraints.
\newblock {\em IFAC-PapersOnLine}, 49(18):158 -- 163, 2016.
\newblock 10th IFAC Symposium on Nonlinear Control Systems NOLCOS 2016.

\bibitem{ART:BJW21}
G.~O. Berger, R.~M. Jungers, and Z.~Wang.
\newblock Chance-constrained quasi-convex optimization with application to
  data-driven switched systems control.
\newblock {\em arXiv preprint arXiv:2101.01415}, 2021.

\bibitem{boyd1994linear}
S.~Boyd, L.~El~Ghaoui, E.~Feron, and V.~Balakrishnan.
\newblock {\em Linear matrix inequalities in system and control theory}.
\newblock SIAM, 1994.

\bibitem{ART:C10}
G.~C. Calafiore.
\newblock Random convex programs.
\newblock {\em SIAM Journal on Optimization}, 20(6):3427--3464, 2010.

\bibitem{campi2018general}
M.~C. Campi, S.~Garatti, and F.~A. Ramponi.
\newblock A general scenario theory for nonconvex optimization and decision
  making.
\newblock {\em IEEE Transactions on Automatic Control}, 63(12):4067--4078,
  2018.

\bibitem{deaecto2016stability}
G.~S. Deaecto and J.~C. Geromel.
\newblock Stability analysis and control design of discrete-time switched
  affine systems.
\newblock {\em IEEE Transactions on Automatic Control}, 62(8):4058--4065, 2016.

\bibitem{deaecto2010switched}
G.~S. Deaecto, J.~C. Geromel, F.~S. Garcia, and J.~A. Pomilio.
\newblock Switched affine systems control design with application to dc--dc
  converters.
\newblock {\em IET Control Theory \& Applications}, 4(7):1201--1210, 2010.

\bibitem{egidio2020global}
L.~N. Egidio, H.~R. Daiha, and G.~S. Deaecto.
\newblock Global asymptotic stability of limit cycle and {$H_2$/$H_\infty$}
  performance of discrete-time switched affine systems.
\newblock {\em Automatica}, 116:108927, 2020.

\bibitem{egidio2021dynamic}
L.~N. {Egidio} and G.~S. {Deaecto}.
\newblock Dynamic output feedback control of discrete-time switched affine
  systems.
\newblock {\em IEEE Transactions on Automatic Control}, 2021.

\bibitem{GerCol06}
J.~C. Geromel and P.~Colaneri.
\newblock Stability and stabilization of discrete time switched systems.
\newblock {\em International Journal of Control}, 79(7):719--728, 2006.

\bibitem{GoeHu06}
R.~Goebel, T.~Hu, and A.~R. Teel.
\newblock {\em Dual Matrix Inequalities in Stability and Performance Analysis
  of Linear Differential/Difference Inclusions}, pages 103--122.
\newblock Birkh{\"a}user Boston, 2006.

\bibitem{Jung09}
R.~M. Jungers.
\newblock {\em The Joint Spectral Radius: Theory and Applications}, volume 385
  of {\em Lecture Notes in Control and Information Sciences}.
\newblock Springer-Verlag, 2009.

\bibitem{KenBal19}
J.~Kenanian, A.~Balkan, R.~M. Jungers, and P.~Tabuada.
\newblock Data driven stability analysis of black-box switched linear systems.
\newblock {\em Automatica}, 109:108533, 2019.

\bibitem{Kor20}
M.~Korda.
\newblock Computing controlled invariant sets from data using convex
  optimization.
\newblock {\em SIAM Journal on Control and Optimization}, 58(5):2871--2899,
  2020.

\bibitem{KouRakKer05}
K.~I. {Kouramas}, S.~V. {Rakovic}, E.~C. {Kerrigan}, J.~C. {Allwright}, and
  D.~Q. {Mayne}.
\newblock On the minimal robust positively invariant set for linear difference
  inclusions.
\newblock In {\em Proceedings of the 44th IEEE Conference on Decision and
  Control}, pages 2296--2301, 2005.

\bibitem{INP:KQHT16}
A.~Kozarev, J.~Quindlen, J.~How, and U.~Topcu.
\newblock Case studies in data-driven verification of dynamical systems.
\newblock In {\em Proceedings of the 19th International Conference on Hybrid
  Systems: Computation and Control}, pages 81--86. ACM, 2016.

\bibitem{Lib03}
D.~Liberzon.
\newblock {\em Switching in Systems and Control}.
\newblock Systems \& Control: Foundations \& Applications. Birkh{\"a}user,
  2003.

\bibitem{MolPya89}
A.P. Molchanov and Y.S. Pyatnitskiy.
\newblock Criteria of asymptotic stability of differential and difference
  inclusions encountered in control theory.
\newblock {\em Systems and Control Letters}, 13(1):59 -- 64, 1989.

\bibitem{RockConv}
R.~T. Rockafellar.
\newblock {\em Convex Analysis}.
\newblock Princeton University Press, 1970.

\bibitem{INP:RWJ21}
A.~Rubbens, Z.~Wang, and R.~M. Jungers.
\newblock Data-driven stability analysis of switched linear systems with sum of
  squares guarantees.
\newblock In {\em The 7th IFAC Conference on Analysis and Design of Hybrid
  Systems}. IFAC, 2021.

\bibitem{sanchez2019practical}
C.~A. Sanchez, G.~Garcia, S.~Hadjeras, W.~P. M.~H. Heemels, and L.~Zaccarian.
\newblock Practical stabilization of switched affine systems with dwell-time
  guarantees.
\newblock {\em IEEE Transactions on Automatic Control}, 64(11):4811--4817,
  2019.

\bibitem{ShoWir07}
R.~Shorten, F.~Wirth, O.~Mason, K.~Wulff, and C.~King.
\newblock Stability criteria for switched and hybrid systems.
\newblock {\em SIAM Review}, 49(4):545--592, 2007.

\bibitem{ART:WJ20}
Z.~Wang and R.~M. Jungers.
\newblock Scenario-based set invariance verification for black-box nonlinear
  systems.
\newblock {\em IEEE Control Systems Letters}, 5(1):193--198, 2021.

\bibitem{xu2010some}
X.~Xu, G.~Zhai, and S.~He.
\newblock Some results on practical stabilizability of discrete-time switched
  affine systems.
\newblock {\em Nonlinear Analysis: Hybrid Systems}, 4(1):113--121, 2010.

\end{thebibliography}
\end{document}